\documentclass[12pt]{article}

\addtolength{\hoffset}{-1 cm}
\addtolength{\voffset}{-2.5 cm}
\addtolength{\textwidth}{2.5 cm}
\addtolength{\textheight}{3.5 cm}

\usepackage{amsfonts}
\usepackage{amssymb}
\usepackage{amsmath}

\usepackage{amsthm}

\usepackage{authblk}
\usepackage[titletoc]{appendix}

\addcontentsline{toc}{section}{Appendices}

\newcommand{\bm}[1]{\mbox{\boldmath $#1$}}

\def\gNH{g_{\mbox{\tiny NHG}}}

\def\f{F}
\def\lie{{\pounds}}

\def\G{{\cal G}}
\def\n{\bar{D}}

\def\S{\Sigma}
\def\k{\kappa_\xi}

\def\be{\begin{equation}}
\def\ee{\end{equation}}
\def\bea{\begin{eqnarray}}
\def\eea{\end{eqnarray}}
\def\bean{\begin{eqnarray*}}
\def\eean{\end{eqnarray*}}

\def\={\stackrel{\Sigma}{=}}

\newlength{\cellwidth}
\setlength{\cellwidth}{8cm}

\usepackage{tabularx}
\usepackage{multirow}
\usepackage{array}

\newcolumntype{P}[1]{>{\centering\arraybackslash}p{#1}}
\newcolumntype{M}[1]{>{\centering\arraybackslash}m{#1}}

\def\eqHeta{\stackrel{\H_{\eta}}{=}}

\def\Htwo{\H_{\eta}}

\def\Kill{{\mathcal A}}
\def\eqH{\stackrel{\H_{\xi}}{=}}

\def\H{{\mathcal H}}

\newtheorem{theorem}{Theorem}
\newtheorem{definition}{Definition}
\newtheorem{remark}{Remark}
\newtheorem{corollary}{Corollary}
\newtheorem{proposition}{Proposition}
\newtheorem{lemma}{Lemma}


\newcounter{mnotecount}[section]

\renewcommand{\themnotecount}{\thesection.\arabic{mnotecount}}
\newcommand{\mnote}[1]
{\protect{\stepcounter{mnotecount}}$^{\mbox{\footnotesize
$
\bullet$\themnotecount}}$ \marginpar{
\raggedright\tiny\em
$\!\!\!\!\!\!\,\bullet$\themnotecount: #1} }

\newcommand{\mnotex}[1]
{\protect{\stepcounter{mnotecount}}$^{\mbox{\footnotesize $\bullet$\themnotecount}}$
\marginpar{
\raggedright\tiny\em
$\!\!\!\!\!\!\,\bullet$\themnotecount: #1} }

\begin{document}

\title{Multiple Killing Horizons and Near Horizon Geometries}
\author[1]{Marc Mars}
\author[2]{Tim-Torben Paetz}
\author[3]{Jos\'e M. M. Senovilla}
\affil[1]{Instituto de F\'isica Fundamental y Matem\'aticas, Universidad de Salamanca, Plaza de la Merced s/n, 37008 Salamanca, Spain}
\affil[2]{Gravitational Physics, University of Vienna, Boltzmanngasse 5, 1090 Vienna, Austria}
\affil[3]{Departamento de F\'isica Te\'orica e Historia de la Ciencia, Universidad del Pa\'is Vasco UPV/EHU, Apartado 644, 48080 Bilbao, Spain}

\maketitle

\vspace{-0.2em}

\begin{abstract}
  Near Horizon Geometries with multiply degenerate Killing horizons $\H$ are considered, and their degenerate Killing vector fields identified. We prove that they all arise from hypersurface-orthogonal Killing vectors of any cut of $\H$ with the inherited metric ---cuts are spacelike co-dimension two submanifolds
  contained 
in $\H$. For each of these Killing vectors on a given cut, there are three different
possibilities for the Near Horizon metric which are presented explicitly. The structure of the metric for Near Horizon Geometries with multiple Killing horizons of order $m\geq 3$ is thereby completely determined, and in particular we prove that the cuts on $\H$ must be warped products with maximally symmetric fibers (ergo of constant curvature).
The question whether multiple degenerate Killing horizons may lead to inequivalent Near Horizon Geometries by using different degenerate Killings is addressed, and answered on the negative: all Near Horizon geometries built from a given multiple degenerate Killing horizon (using different degenerate Killings) are isometric.
\end{abstract}





\section{Introduction}

In a recent paper \cite{MPS} we have introduced the notion of multiple
Killing horizon (MKH) and have initiated a systematic study of its properties. In essence,
multiple Killing horizons are null hypersurfaces which are
simultaneously Killing horizons of two or more Killing vectors. The precise
definition, recalled in the next section,
is slightly more involved as one needs to take care of the fact
that different generators can have different fixed points.
Several general properties
of multiple Killing horizons were obtained in \cite{MPS}. In particular, one can attach a natural number $m \geq 2$ to each  MKH, called order, which counts the number of its linearly independent Killing generators.  The order
of any MKH cannot be larger than the dimension of the spacetime where it
lies, and examples exist for MKHs of any allowed order.
Another important property of MKHs is that the surface gravity of each  generator is constant, and  that at most one
(in an appropriate sense) can  be different from zero.
MKHs are called fully degenerate or non-fully degenerate depending
on whether all its surface gravities vanish or not.
The  order of a fully degenerate MKH can be any number
strictly smaller than the spacetime dimension,  and again examples of
any  order exist
\cite{MPS}.

An example of paramount importance of MKH is the Killing horizon of a near
horizon geometry (NHG) spacetime. These spacetimes are obtained
by infinite zoom of the geometry around
any Killing horizon with vanishing surface gravity.
Thus, NHG describe the ``focused'' geometry of degenerate Killing horizons.
They turn out to be very interesting objects both from a geometric
and from a physical perspective  and have been extensively studied
(see \cite{KL} and references therein). One of its general properties is the so-called ``enhancement of symmetry'':
in addition to the Killing vector associated to the original
degenerate Killing horizon (which is preserved by the ``zoom'' limit),
any NHG always admits a second Killing vector \cite{KL,PLJ,LSW} with respect
to which the horizon is non-degenerate. This immediately turns all horizons
associated to NHG spacetimes into multiple Killing horizons.

Now, what happens
if the original horizon (before taking the near horizon limit) is a
multiple Killing horizon itself?  We have proved in \cite{MPS} that any
of its degenerate Killing generators survives to the limit.
Hence, if the original multiple Killing horizon was fully-degenerate of order $m$, it follows that the near horizon limit has a non-fully degenerate
Killing horizon of order $m+1$. Correspondingly, if the original horizon was non-fully degenerate
of order  at least three, then the near horizon limit has at least the same
order.

This fact raises the following natural question, posed in \cite{MPS}.
Since the original Killing horizon is degenerate with respect to more than
one linearly independent Killing vector and any one of them  can be used to
perform the near horizon limit, are all the NHG one obtains by this process
(locally) isometric to each other or not?
If the answer were no,  i.e. if the near horizon geometry
depended on the choice of degenerate Killing generator, one could
iterate the process and generate a potentially very large class
of near horizon geometries starting from a single MKH. If, on the other hand,
the answer is yes (the limit is independent of the choice of generator) one
concludes that any degenerate Killing horizon (multiple or not)
has a well-defined and unique near horizon geometry attached to it. One of
the main objectives of this paper is to answer this question and prove that
 any degenerate Killing horizon (multiple or not) defines a unique
near horizon geometry.

The strategy we follow is to find the explicit
coordinate change (i.e. local isometry) that transforms one near horizon geometry limit into another. Despite its apparent simplicity, the problem turns out to be substantially more involved than one could have expected.
The key to success is the ability to obtain
very explicit  and fully general information on
near horizon geometries admitting
 multiple Killing horizons of order at least three.
Finding these results is our
second main achievement   and opens up the possibility of,
eventually, finding a  complete classification of all
near horizon geometries admitting multiple Killing horizons with at least two
degenerate generators. To be more specific, recall that near horizon geometries of dimension $n+1$
are determined by a Riemannian manifold $(S,\gamma)$ of dimension $n-1$
endowed with a one-form $\bm{s}$ and a scalar $h$.  We show that
whenever the horizon of the near horizon geometry is multiple
of order $m$ at least three, the geometry of $S$ is (locally)
a warped product with fibers of dimension $m-2$ of constant sectional
curvature. In addition, there are $m-2$ linearly independent hypersurface-orthogonal
Killing vectors of $(S,\gamma)$ tangent to the fibers
which, together with the trivial zero vector field,
are in one-to-one correspondence to the degenerate Killing generators
of the MKH. The one-form $\bm{s}$ and scalar $h$ are completely and
explicitly determined in terms
of the geometric properties of any one of the non-trivial hypersurface-orthogonal
Killing vectors of $(S,\gamma)$.

The plan of the paper is as follows. In section \ref{basics} we summarize the
main results of \cite{MPS} that are needed in this work.
In  section \ref{sec:NHGKillings} we recall the limit process that leads to the near
horizon geometry  and discuss in which sense it is
determined by the data $(S,\gamma,\bm{s},h)$.
We also recall the key result in \cite{MPS} showing that the near horizon limit does not
reduce the order of multiple Killing horizons. We then find the
equations that need to be satisfied by any degenerate
Killing generator of the horizon. They  involve the near horizon
data $(S,\gamma,\bm{s},h)$ and the proportionality function $f$ on $S$
between generators.
The full set of equations include
the so-called master equation
 that must necessarily
be satisfied
by all MKH \cite{MPS}. In the near horizon geometry case this
is supplemented by two more equations
involving $h$. We
find the general solution of the full system and show that the solutions
are related to hypersurface orthogonal Killing vectors in $(S,\gamma)$ (theorem
\ref{th:sexact}
and lemma \ref{ODESol}). In section \ref{Sect:structure}
we exploit these results to prove that
$(S,\gamma)$ is a warped product with fibers of constant curvature
(theorem \ref{th:warped}) which
in particular means that if the  MKH is of order $m \geq 3$ then $(S,\gamma)$
admits at least $(m-1)(m-2)/2$ linearly independent Killing vectors.
The value of the constant curvature of those fibers is explicitly determined (theorem \ref{th:signofK}).
Finally, we devote section \ref{Uniqueness} to proving that given
any
near horizon geometry spacetime with a multiple Killing horizon of order
at least three, any choice of degenerate generator leads, via
the standard near horizon limit, to a spacetime which
is  locally isometric to the original one (theorem
\ref{th:isom}). As already mentioned,
we show this by finding explicitly the coordinate change that transforms
one metric into another. We first find necessary geometric
conditions that must  be
satisfied by the coordinate change, and which ultimately determines it
in an essentially unique way. We then prove
that this coordinate change  indeed defines 
a (local) isometry between the two spacetimes. This step requires exploiting
the explicit information about near horizon geometries with MKH
obtained in previous sections. We finish the paper
by stating and proving our main theorem, namely that to any multiple Killing horizon one can attach a unique near horizon geometry (theorem
\ref{th:main}).

\subsection{Notation}
$(M,g)$ is a spacetime, that is, a connected, oriented and time-oriented $(n+1)$-dimensional Lorentzian manifold with metric $g$ of signature $(-,+,\dots,+)$. All submanifolds will be without boundary and the topological closure of a set $A$ is denoted by $\overline{A}$. Given a vector (field) $v$ in $TM$, $\bm{v}$ denotes the metrically related one-form.
Moreover, $\mathfrak{X}(N)$ denotes the set of smooth vector fields on a differentiable manifold $N$.
We use index-free and index notation.
Lowercase Greek letters $\alpha,\beta,\dots $ are spacetime indices and run from $0$ to $n$.
Capital Latin indices $A,B,\dots$ are co-dimension-2 submanifold indices running from $2$ to $n$. Small Latin indices $i,j,\dots$ will enumerate 
either (1) the different Killing vectors of multiple Killing horizons then taking values in $\{1,\dots , m\}$, where $m\leq n+1$, or (2) the hypersurface-orthogonal Killing vectors in $(S,\gamma)$ generating the maximally symmetric fibers, in which case they take values in $\{1,\dots ,p\}$ with $p\leq n-1$. In index-free notation covectors and vectors metrically related to each other will be
distinguished using boldface for covectors.

\section{Basics on Multiple Killing Horizons}
\label{basics}
We start by recalling the notions of Killing Horizon and Bifurcate Killing Horizon of a spacetime $(M,g)$
of dimension at least two.
\begin{definition}[Killing horizon of a Killing $\xi$]
A {\bf Killing horizon of a Killing vector $\xi$} of $(M,g)$
is a smooth embedded null hypersurface $\H_{\xi}$
such that $\xi|_{\H_\xi}$ is null, nowhere zero and tangent to $\H_\xi$.
In case $\H_\xi$ has more than one connected component we also demand that the interior of its closure is a smooth connected hypersurface.
\end{definition}
The reason to allow for multiple connected components was discussed in \cite{MPS}.
Any Killing horizon $\H_\xi$ possesses the notion of {\bf surface gravity} $\kappa_{\xi}$ defined by
\begin{align}
\nabla_{\xi} \xi \eqH  \kappa_{\xi} \xi \quad
\quad \mbox{or equivalently} \quad \quad \mbox{grad} (g(\xi,\xi)) \eqH  -2 \kappa_{\xi} \xi
\label{SurGrav}.
\end{align}
If $\k$ vanishes, then $\H_\xi$ is called {\em degenerate}.

\begin{definition}[Bifurcate Killing horizon]
Let $S$ be a connected co-dimension two spacelike submanifold of fixed points of a Killing vector $\xi$. The set of points along all null geodesics orthogonal to $S$ comprises a {\bf bifurcate Killing horizon} \cite{Boyer,KW,RW} with respect to $\xi$.
\end{definition}
A bifurcate Killing horizon is composed of five pieces: two connected Killing horizons $\H_1^+$ and $\H_2^+$ to the future of $S$ ---not including $S$---, two more connected Killing horizons $\H_1^-$ and $\H_2^-$ to the past of $S$, and $S$ itself.
Notice that $\H_1^+ \cup \H_1^- $ is a Killing horizon according to the previous definition, and its closure (adding $S$) is a  connected null  hypersurface; and similarly for $\H_2^+ \cup \H_2^-$.

In \cite{MPS} we introduced the following class of Killing Horizons.
\begin{definition}[Multiple Killing horizon (MKH)]\label{def:MKH}
A {\bf multiple Killing horizon} in $(M,g)$ is an embedded null hypersurface $\H$ such that there exist
Killing horizons  $\H_{\xi_i}$ associated to linearly independent Killing vectors $\xi_i$, $i\in\{1,\dots ,m\}$ with $m\geq 2$, with the property
\begin{align*}
\overline{\H} = \overline{\H}_{\xi_1} = \dots = \overline{\H}_{\xi_m}.
\end{align*}
\end{definition}

As proven in \cite{MPS}, the set of all Killing vectors in $(M,g)$ with a common multiple Killing horizon constitute a Lie sub-algebra, denoted by $\Kill_\H$, of the Killing Lie algebra. Its dimension $m:=$dim $\Kill_\H \geq 2$ is called the order of the MKH.
%
Sometimes we write `double, triple,' etcetera, MKHs for $m=2,3$ etc.

The following fundamental theorem was proven in \cite{MPS}.
\begin{theorem}\label{th:algebra}
$\Kill_\H$ contains an Abelian sub-algebra $\Kill_\H^{deg}$ of dimension at least $m-1$ whose elements all have vanishing surface gravities. If $\Kill_\H^{deg} \neq \Kill_\H$, the remaining independent Killing vector (say $\xi$) in $\Kill_\H\setminus \Kill_\H^{deg}$ has a constant surface gravity $\k\neq 0$ and satisfies
\be
\left[\xi,\eta \right] =-\k \eta , \hspace{1cm} \forall \eta \in \Kill_\H^{deg}.
\ee
\end{theorem}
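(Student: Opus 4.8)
The plan is to reduce the entire statement to a single \emph{bracket identity} valid on all of $M$, namely $[\xi,\eta]=\kappa_\eta\,\xi-\kappa_\xi\,\eta$ for all $\xi,\eta\in\Kill_\H$, from which the Abelian structure, its dimension, and the final commutation relation all follow by linear algebra. The starting observation is that every element of $\Kill_\H$ is, by definition, null and tangent to $\H$; since a null hypersurface carries a unique null tangent direction at each point, any two generators are proportional along $\H$. Thus, on the open set of $\H$ where $\xi$ does not vanish, one may write $\eta=f\xi$ for a function $f$, and a direct computation of $[\xi,f\xi]=\xi(f)\,\xi$ reduces the bracket to knowing $\xi(f)$.

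First I would pin down $\xi(f)$ by evaluating $\nabla_\eta\eta$ in two ways: expanding $\nabla_{f\xi}(f\xi)=f\,\xi(f)\,\xi+f^{2}\nabla_\xi\xi$ and using $\nabla_\xi\xi=\kappa_\xi\xi$ together with $\nabla_\eta\eta=\kappa_\eta\eta=\kappa_\eta f\xi$ on $\H$, which yields $\xi(f)=\kappa_\eta-f\kappa_\xi$ on $\H$. Feeding this back gives the on-horizon identity $[\xi,\eta]=\kappa_\eta\xi-\kappa_\xi\eta$ along $\H$. The same computation applied to a constant-coefficient combination $a\xi+b\eta$ shows that $\kappa_{a\xi+b\eta}=a\kappa_\xi+b\kappa_\eta$, i.e. the surface-gravity assignment $\kappa:\Kill_\H\to\mathbb{R}$, $\xi\mapsto\kappa_\xi$, is a linear functional; here I would invoke the constancy of each surface gravity (a property of MKHs already at our disposal) so that $\kappa_\xi,\kappa_\eta$ are genuine constants.

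The crucial step is upgrading the identity from $\H$ to all of $M$. Because the surface gravities are constant, $K:=[\xi,\eta]-\kappa_\eta\xi+\kappa_\xi\eta$ is again a Killing vector, and it vanishes on $\H$. I would then argue that its covariant derivative $\nabla K$, which is antisymmetric by the Killing equation, also vanishes on $\H$: working in a null frame $\{\ell,k,e_A\}$ adapted to $\H$, the vanishing of $K$ on $\H$ forces all tangential derivatives, hence all contractions of the antisymmetric tensor $\nabla K$ with tangent vectors of $\H$, to vanish, and antisymmetry then kills the remaining transverse components. Thus the full $1$-jet of $K$ vanishes at a point of $\H$, and since a Killing field is determined by its value and first derivatives at a single point of the connected manifold $M$, we conclude $K\equiv 0$. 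This establishes the bracket identity globally.

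Finally I would read off the structure. Setting $\Kill_\H^{deg}:=\ker\kappa$, the bracket identity gives $[\eta,\eta']=\kappa_{\eta'}\eta-\kappa_\eta\eta'=0$ for $\eta,\eta'\in\Kill_\H^{deg}$, so $\Kill_\H^{deg}$ is Abelian. If $\kappa\equiv 0$ then $\Kill_\H^{deg}=\Kill_\H$ has dimension $m$; otherwise $\dim\ker\kappa=m-1$, and choosing any $\xi$ with $\kappa_\xi\neq 0$ the identity specializes to $[\xi,\eta]=-\kappa_\xi\eta$ for every $\eta\in\Kill_\H^{deg}$, which is precisely the claimed relation. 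I expect the main obstacle to be this global extension step: establishing that a Killing field vanishing on the null hypersurface $\H$ has vanishing transverse derivative there as well, together with the mild technical care needed because distinct generators may have distinct fixed-point sets, so that the identities must first be obtained on the open dense set where the relevant generators are nonzero and then extended by continuity.
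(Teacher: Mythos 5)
Your proposal is correct and takes essentially the same route as the proof this paper imports from \cite{MPS}: proportionality $\eta=f\xi$ of generators along the horizon, the relation $\xi(f)=\kappa_\eta-f\kappa_\xi$ yielding $[\xi,\eta]=\kappa_\eta\,\xi-\kappa_\xi\,\eta$ on $\H$ together with linearity of $\xi\mapsto\kappa_\xi$, and extension off $\H$ via the standard fact that a Killing field vanishing on a hypersurface has vanishing one-jet there (the antisymmetry/rank argument) and hence vanishes on the connected $M$. The technical points you flag --- constancy of the surface gravities as prior input, working on the dense open subset where both generators are nonzero, and noting that the same one-jet lemma also guarantees that nontrivial elements of $\ker\kappa$ cannot vanish identically on $\H$ --- are precisely those handled in \cite{MPS}.
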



Thus, there are two essentially inequivalent possibilities: {\bf fully degenerate} MKH if $\Kill_\H = \Kill_\H^{deg}$, in which case its Lie algebra is Abelian and all surface gravities vanish; and {\bf non-fully degenerate} MKHs, with an essentially unique non-zero surface gravity.

In general, the maximum possible dimension of $\Kill_\H^{deg}$ is $n=$ {\em dim}$(M)-1$. Therefore, the maximum possible order of a MKH $\H$ is $m=n$ for fully degenerate $\H$,
and $m=n+1$ for non-fully degenerate $\H$.

\section{Degenerate Killing Vectors on Near Horizon Geometries}\label{sec:NHGKillings}
The near horizon geometry of a degenerate Killing horizon $\H_{\eta}$ is usually defined as follows \cite{KL}: nearby $\H_{\eta}$ with degenerate Killing vector $\eta$, local Gaussian null coordinates $\{v,u,x^A\}$ can be chosen such that the metric reads
$$
g = 2dv\left(du + 2 u\,  \check s_A dx^A +\frac{1}{2} u^2 \check h dv\right) +\check \gamma_{AB} dx^A dx^B
$$
where $\check h$, $\check s_A$, and $\check\gamma_{AB}$ are independent of $v$, the degenerate Killing  reads Scan$\eta=\partial_v$ and the degenerate Killing horizon has been placed at $\H_{\eta} = \{u=0\}$.
Replacing $v\rightarrow v/\lambda$ and $u\rightarrow u\lambda$ here and taking the limit $\lambda \rightarrow 0$ one is led to the metric of its ``near-horizon'' geometry
\be
g_{\mbox{\tiny NHG}}
= 2dv\left(du + 2 u\,  s_A dx^A +\frac{1}{2} u^2 h dv\right) +\gamma_{AB} dx^A dx^B \label{NHG}
\ee
where $h=\check h|_{u=0}$, $s_A=\check s_A|_{u=0}$ and $\gamma_{AB}=\check\gamma_{AB}|_{u=0}$.
This is the ``focused'' local geometry near $\H_{\eta}$.

\begin{remark}\label{intrinsic}
The NHG of a degenerate Killing horizon $\H_{\eta}$ can be intrinsically and geometrically defined as follows: Pick up any co-dimension two submanifold $S\subset \H_{\eta}$ (we call these {\em cuts}). Then
\begin{itemize}
\item $\gamma$ is the first fundamental form on $S$
\item $\bm{s}$ is the {\em torsion one-form} on $S$, defined by $\bm{s}(V) := \bm{\ell} (\nabla_V \eta)$ for any $V\in \mathfrak{X}(S)$, where $\ell$ is uniquely determined by the conditions $g(\ell,V)=0 \, \, \, \forall V\in \mathfrak{X}(S)$, $g(\ell,\ell)=0$ and $g(\ell,\eta)=-1$.
\item $h=2 \gamma^{\sharp}(\bm{s},\bm{s}) - \mbox{{\rm div}} \, s + \frac{1}{2} R|_{S}-
\frac{1}{2}{\rm tr}_\gamma Ric |_{S}$
\end{itemize}
where $\gamma^{\sharp}$ is the contravariant metric associated to $\gamma$,
{\rm div} is the divergence on $(S,\gamma)$, $R$ is the scalar curvature and $Ric$ the Ricci tensor of $(M,g)$, both pull-backed to $S$.
\end{remark}
One can check
 that the scalar curvature $R$ of the metric $g$ coincides with the scalar curvature of the metric $g_{\mbox{\tiny{NHG}}}$ at every cut $S$, and similarly for the term ${\rm tr}_\gamma Ric |_{S}$.

 The construction of a near horizon geometry relies on Gaussian null coordinates associated to the degenerate Killing vector $\eta$.
 These coordinates cannot cover domains where  $\eta$ has fixed points.
 Since we assume all spacetimes (in particular the NHG spacetime)
 to be connected, the Gaussian null coordinates leading to the NHG
 can cover at most
 one connected component of $\H_{\eta}$ and this is a proper subset
 of $\overline{\H}_{\eta}$ whenever the latter has fixed points of $\eta$.
 When the limit is performed to compute the NHG only this portion
 of $\H_{\eta}$ is considered whence the  degenerate Killing vector $\eta=\partial_v$
 of the NHG never has fixed points. The NHG does not see the rest
 of the original Killing horizon. The intrinsic definition
 in Remark~\ref{intrinsic}  has exacly the same limitation
 because the normalization condition $g(\ell,\eta)=-1$ cannot be fulfilled
 in  domains where $\eta$ has zeros. In this paper we want to understand the NHG limit of multiple Killing horizons, so we must restrict to domains
 of the horizon which are connected and contain no fixed points of any
 of the degenerate Killing generators under consideration. Our results are valid only on those domains. To understand the global picture one would need to devise
 a way of defining NHG that allows for fixed points of the degenerate Killing.
 This is an interesting problem that deserves consideration but it is beyond the scope of this paper. 

Any near-horizon geometry in the above sense possesses a non-fully degenerate MKH $\H_{\mbox{\tiny NHG}}$ because
\begin{enumerate}
\item  The original degenerate Killing $\eta$ leads, after the limit, to a Killing vector  which is also degenerate, and, by definition of the NHG, with $\H_{\mbox{\tiny NHG}} =\H_{\eta}$. 
  This follows easily from the explicit local expression of the metric (\ref{NHG}) as the Killing vector $\eta=\partial_v$ satisfies $g_{\mbox{\tiny NHG}} (\eta,\eta)=u^2 h$ so that $\eta$ is null on $\{u=0\}$
 and its surface gravity vanishes: $\kappa_{\eta}:= \mbox{grad} (g(\eta,\eta))\eqHeta 0$.
\item The metric (\ref{NHG}) always has another Killing vector given by \cite{KL,PLJ,LSW}
$$
\xi = v\partial_v -u\partial_u
$$
which is null on, and tangent to, $\H_{\eta}=\{u=0\}$ except at its set of fixed points $S_\xi\supset \{u=v=0\}$. Thus, $\H_{\xi} = \H_{\eta}\setminus S_\xi$ is a Killing horizon for $\xi$ with several connected components but such that $\overline{\H}_{\xi}=\H_{\eta}=\H_{\mbox{\tiny NHG}}$, 
and therefore $\H_{\mbox{\tiny NHG}}$ is a MKH of order $m\geq 2$.
\end{enumerate}
The commutator is
$$
[\xi,\eta] = -\eta
$$
hence Theorem \ref{th:algebra} implies that $\H_{\mbox{\tiny NHG}}$ is non-fully degenerate with $\k =1$. Actually, any cut $S_{v_0}:=\{u=0,v=v_0\}$
of $\H_{\eta}$ is the bifurcation surface of a bifurcate Killing horizon with bifurcation Killing vector $\xi -v_0\eta $.

In \cite{MPS} we established the following theorem.
\begin{theorem}
\label{NearHor}
Let $\H$ be a multiple Killing horizon of order $m$ and
$(M_{\mbox{\tiny NHG}}, g_{\mbox{\tiny{NHG}}})$
be the near-horizon
geometry of a degenerate Killing vector $\eta\in \Kill_\H^{deg}$ . Then
\begin{itemize}
\item[(i)] If $\H$ is fully degenerate,
$(M_{\mbox{\tiny NHG}}, g_{\mbox{\tiny{NHG}}})$ admits a multiple Killing horizon $\H_{\mbox{\tiny NHG}}$
of order at least $m+1$.
\item[(ii)] If $\H$ is non-fully degenerate and $m \geq 3$, then
$(M_{\mbox{\tiny NHG}}, g_{\mbox{\tiny{NHG}}})$  has a multiple Killing horizon $\H_{\mbox{\tiny NHG}}$
of order at least $m$.
\end{itemize}
\end{theorem}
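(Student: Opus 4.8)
The plan is to realize the near-horizon construction as a limit of a one-parameter family of isometric copies of $(M,g)$ and to determine which elements of $\Kill_\H$ survive it. In the Gaussian null coordinates $\{v,u,x^A\}$ adapted to $\eta$, introduce rescaled coordinates $(V,U,X)$ through $v=V/\lambda$, $u=U\lambda$, $x=X$; then the metric reads $g_\lambda$ in these coordinates and $g_\lambda\to\gNH$ in $C^\infty$ on compact sets as $\lambda\to0$. Since every $\zeta\in\Kill_\H$ is Killing for $g$, and $g$ expressed in $(V,U,X)$ is exactly $g_\lambda$, the field $\zeta$ (with its components re-expressed) is Killing for each $g_\lambda$; hence any $\lambda$-dependent rescaling of $\zeta$ that converges as $\lambda\to0$ has a limit which is Killing for $\gNH$. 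The whole argument then reduces to computing these limits and counting independent survivors that generate $\H_{\mathrm{NHG}}=\{U=0\}$.

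The decisive input is the $u$-expansion of a degenerate generator $\eta_i\in\Kill_\H^{deg}$. Since $[\eta_i,\eta]=0$ its components are $v$-independent, and since $\eta_i$ is null and tangent to $\H=\{u=0\}$ its restriction there is aligned with the unique null generator, $\eta_i|_{u=0}=f_i(x)\,\partial_v$ with $f_i$ nowhere zero; in particular $\eta_i^u=O(u)$ and $\eta_i^A=O(u)$. A short computation then gives $g(\eta_i,\eta_i)=2f_i\,(\partial_u\eta_i^u|_0)\,u+O(u^2)$, and using $\mathrm{grad}(u)|_{u=0}=\partial_v$ the surface-gravity relation (\ref{SurGrav}) reads $2f_i(\partial_u\eta_i^u|_0)\,\partial_v=-2\kappa_{\eta_i}f_i\,\partial_v$ on $\H$. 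Because $\eta_i$ is degenerate, $\kappa_{\eta_i}=0$, so $\partial_u\eta_i^u|_0=0$ and thus $\eta_i^u=O(u^2)$. This upgrade from $O(u)$ to $O(u^2)$, forced solely by the vanishing of the surface gravity, is what makes the limit finite.

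Substituting $\partial_v=\lambda\partial_V$, $\partial_u=\lambda^{-1}\partial_U$, $\partial_{x^A}=\partial_{X^A}$, the rescaled field $\lambda^{-1}\eta_i$ has coordinate expression $\eta_i^v(U\lambda,X)\,\partial_V+\lambda^{-2}\eta_i^u(U\lambda,X)\,\partial_U+\lambda^{-1}\eta_i^A(U\lambda,X)\,\partial_X$; the orders $\eta_i^u=O(u^2)$ and $\eta_i^A=O(u)$ make the last two coefficients finite, and the limit is a smooth field $\widetilde\eta_i$ with $\widetilde\eta_i|_{U=0}=f_i(X)\,\partial_V$, Killing for $\gNH$ and, since $f_i$ is nowhere zero, null, tangent and nowhere zero on $\{U=0\}$; hence each $\widetilde\eta_i\in\Kill_{\H_{\mathrm{NHG}}}$ (for $\eta_1=\eta$ one recovers $\partial_V$). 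In the non-fully-degenerate case I would additionally track $\xi$: here $[\xi,\eta]=-\k\eta$ forces $\xi^v=\k v+p(u,x)$ and, by the same surface-gravity computation with $\kappa_\xi=\k$, $\partial_u\xi^u|_0=-\k$; consequently $\xi$ itself (with no rescaling) converges, and a direct computation yields $\xi\to\k\,(V\partial_V-U\partial_U)$, i.e. $\xi$ survives precisely as a multiple of the enhanced scaling Killing vector of the list above.

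It remains to count. The linear map $\eta_i\mapsto\widetilde\eta_i$ is injective on $\Kill_\H^{deg}$ provided the horizon restrictions $f_i\,\partial_V$ are independent (equivalently the functions $f_i$ are independent), and the scaling vector $V\partial_V-U\partial_U$, whose restriction $V\partial_V$ carries an explicit $V$-dependence, is independent of all $\widetilde\eta_i$. In case (i) there is no $\xi$, so the $m$ survivors $\widetilde\eta_1,\dots,\widetilde\eta_m$ together with the genuinely new scaling vector give $m+1$ independent elements of $\Kill_{\H_{\mathrm{NHG}}}$; in case (ii) the $m-1$ survivors $\widetilde\eta_1,\dots,\widetilde\eta_{m-1}$ together with the scaling vector, which is now nothing but the survivor $\k^{-1}\lim\xi$ and hence adds no dimension beyond the algebra, give $m$ independent elements; the hypothesis $m\geq3$ merely guarantees that this improves on the universal order-two bound of every near-horizon geometry. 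The main obstacle is the convergence step of the previous paragraph: identifying the correct power of $\lambda$ in the rescaling and proving the limit is a nonzero null generator, which rests entirely on the expansion orders $\eta_i^u=O(u^2)$, $\eta_i^A=O(u)$; the secondary obstacle is the injectivity of the restriction-to-horizon map, which is exactly what separates the $m+1$ of case (i) from the $m$ of case (ii).
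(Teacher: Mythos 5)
Note first that this paper does not actually prove Theorem \ref{NearHor}: it is imported verbatim from \cite{MPS}. Your strategy is, as far as the mechanism goes, the right one and essentially the one underlying the cited proof: realize the near-horizon limit as a $C^\infty$ limit of the isometric family $g_\lambda$, use $[\eta,\eta_i]=0$ (from Theorem \ref{th:algebra}) to get $v$-independence of the components, extract $\partial_u \eta_i^u|_{u=0}=-\kappa_{\eta_i}$ from (\ref{SurGrav}) so that degeneracy upgrades $\eta_i^u$ from $O(u)$ to $O(u^2)$, and pass $\lambda^{-1}\eta_i$ and, in the non-fully degenerate case, $\xi$ itself to the limit. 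Your computations here are correct: the limits $\tilde\eta_i$ have exactly the structure (\ref{degKill}), and $\lim_{\lambda\to 0}\xi=\k\,(V\partial_V-U\partial_U)$ recovers the enhanced scaling Killing vector.

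There are, however, two points you flag but do not close, and your count genuinely depends on them. First, the linear independence of the restrictions $f_i\,\partial_V$: without it the map $\eta_i\mapsto\tilde\eta_i$ could have a kernel and the lower bounds $m+1$ and $m$ would fail. The fix is short and should be included: if a Killing field $\chi=\sum_i c_i\eta_i$ vanishes on the connected component of $\{u=0\}$ covered by the coordinates, then $\nabla_X\chi=0$ for every $X$ tangent to it, and the antisymmetry of $\nabla_\mu\chi_\nu$ forces also the transversal components of $\nabla\chi$ to vanish there; since a Killing field is determined by its $1$-jet at a single point, $\chi\equiv 0$ on $M$, contradicting independence of the $\eta_i$. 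Hence $\sum_i c_i f_i\not\equiv 0$ for $(c_i)\neq 0$, and by the same argument applied to open subsets each set $\{f_i=0\}$ has empty interior. Second, this same lemma is what repairs your claim that ``$f_i$ is nowhere zero'', which is false in general: the other degenerate generators may well have fixed points inside the coordinate domain, located exactly at the zeros of $f_i$ (see the remark following Proposition \ref{prop:KillingEqs}). What survives is that $\{U=0\}\setminus\{f_i=0\}$ is a Killing horizon of $\tilde\eta_i$ whose closure coincides with that of $\{U=0\}$, which is precisely what Definition \ref{def:MKH} demands; your conclusion then stands, but only because the MKH definition is phrased in terms of closures. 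Finally, your closing sentence misattributes the dichotomy: injectivity of the restriction map is needed equally in cases (i) and (ii); what separates $m+1$ from $m$ is your own earlier (correct) observation that in the non-fully degenerate case $\lim\xi$ merely reproduces $\k$ times the scaling Killing vector and so adds no dimension, whereas in the fully degenerate case that vector is genuinely new.
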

Of course, the theorem also holds if $\H$ is non-fully degenerate and of order $m=2$, but then the result is trivial, as the MKH $\H_{\mbox{\tiny NHG}}$ of all NHGs have $m=2$ at least.

As briefly discussed in \cite{MPS}, this theorem raises the natural question of whether or not the NHG spacetime $(M_{\mbox{\tiny NHG}}, g_{\mbox{\tiny{NHG}}})$ arising from
a multiple Killing horizon $\H$ is independent
of the choice of  $\eta\in \Kill_\H^{deg}$. To address this problem, in the next subsection we identify the NHGs that possess a MKH of order $m\geq 3$ as well as their corresponding degenerate Killing vectors.

We have already mentioned above that the NHG taken w.r.t.\ a degenerate Killing vector $\eta^{(1)}$ only takes (in general) a proper subset of
the MKH ${\H}$ into account.
A second   degenerate Killing vector $\eta^{(2)}$ may have a different fixed point set on  ${\H}$
so that the NHGs computed from $\eta^{(1)}$ and $\eta^{(2)}$ work
in general in different subsets of the MKH of the original spacetime.
For this reason when analyzing (local) isometry of the NHGs we will consider
connected portions of the  MKH where both degenerate Killings have no fixed points.

\subsection{Degenerate Killing vectors of MKH in NHGs}
We start with a metric of type (\ref{NHG}), which holds
around a connected component of $\H_\eta$, and derive the equations for the existence of degenerate Killing vector fields other than $\eta =\partial_v$ there.
\begin{proposition}\label{prop:KillingEqs}
Any Killing vector $\zeta$ of the metric (\ref{NHG}) which has (the appropriate dense subset of) $ \H_\eta=\{u=0\}$ as degenerate Killing horizon must take the form
\be
\zeta = f\partial_v +\frac{u^2}{2} \Delta f \partial_u - u\,  {\rm grad} f \label{degKill}
\ee
where $\Delta$ and {\rm grad} are the Laplacian and gradient on any cut $S_0\subset \H_\eta$, and the function $f$ satisfies the following relations:
\bea
D_A D_B f &=&s_A D_B f +s_B D_A f , \label{master}\\
D_A f D^A h &=& 2 h s^AD_A f, \label{eq1}\\
h D_A f &=& 2D^B f  \left(D_B s_A -D_A s_B \right)+D_A (s^B D_B f) - 2 s_A s^B D_B f,  \label{eq2}
\eea
where $D_A$ is the covariant derivative on $(S_0,\gamma)$.

Conversely, for any function $f$ which solves (\ref{master})-(\ref{eq2}) the vector field  (\ref{degKill}) belongs
to $\Kill_{\H_\eta}^{deg}$.
\end{proposition}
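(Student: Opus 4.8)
The plan is to impose the Killing equations $\lie_\zeta\gNH=0$ directly in the Gaussian null coordinates $\{v,u,x^A\}$ of the metric (\ref{NHG}), organising them by tensorial component and by powers of $u$, after first using the horizon conditions to fix the qualitative shape of $\zeta$. Writing $\zeta=\zeta^v\partial_v+\zeta^u\partial_u+\zeta^A\partial_A$, the requirement that $\{u=0\}$ be a Killing horizon of $\zeta$ forces $\zeta$ to be null and tangent there; since at $u=0$ the metric degenerates to $2dv\,du+\gamma_{AB}dx^Adx^B$, the only null direction tangent to the hypersurface is $\partial_v$, so $\zeta|_{u=0}=f\,\partial_v$ with $\zeta^u|_{u=0}=\zeta^A|_{u=0}=0$, where a priori $f=f(v,x)$.

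First I would integrate the ``cheap'' components. The $(u,u)$ equation reduces to $\partial_u\zeta^v=0$, whence $\zeta^v=f(v,x)$; the $(u,A)$ equation then reads $\gamma_{AB}\partial_u\zeta^B+\partial_A\zeta^v=0$ and, with $\zeta^A|_{u=0}=0$, integrates to $\zeta^A=-u\,D^Af$; and the $(v,u)$ equation gives $\partial_u\zeta^u=2u\,s_AD^Af-\partial_v f$, hence $\zeta^u=u^2 s_AD^Af-u\,\partial_v f$ after using $\zeta^u|_{u=0}=0$. Next, inserting $\zeta_A=-u\,D_Af$ into the $(A,B)$ equation turns it into $D_A\zeta_B+D_B\zeta_A+2u(s_AD_Bf+s_BD_Af)=0$, i.e.\ exactly the master equation (\ref{master}); tracing the latter with $\gamma^{AB}$ yields $\Delta f=2s^AD_Af$, which simplifies $\zeta^u$ to $\tfrac{u^2}{2}\Delta f-u\,\partial_v f$.

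At this point the degeneracy hypothesis enters decisively. I would evaluate $\gNH(\zeta,\zeta)$ for this $\zeta$ and observe that its only term linear in $u$ at the horizon is $-2u\,f\,\partial_v f$; since $\zeta_u=f$, the defining relation (\ref{SurGrav}) read in its $u$-component gives $\kappa_\zeta=\partial_v f$. Imposing that the horizon be \emph{degenerate}, $\kappa_\zeta=0$, therefore forces $\partial_v f=0$, so $f=f(x)$ is a function on the cut and $\zeta$ takes precisely the asserted form (\ref{degKill}). It then remains to extract (\ref{eq1}) and (\ref{eq2}) from the two components involving $h$: with $\partial_v f=0$ the $(v,v)$ equation collapses to its $u^3$ coefficient $h\,\Delta f-D^CfD_Ch=0$, which, using $\Delta f=2s^AD_Af$, is exactly (\ref{eq1}); the $(v,A)$ equation, after division by $u^2$, is (\ref{eq2}). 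The converse is obtained by reading these same component identities backwards: for any $f$ solving (\ref{master})--(\ref{eq2}) on a cut (extended $v$-independently), the field (\ref{degKill}) satisfies every component of $\lie_\zeta\gNH=0$, is null, tangent and nonzero on the dense set $\{u=0,\ f\neq0\}$, and has $\kappa_\zeta=\partial_v f=0$, so it belongs to $\Kill_{\H_\eta}^{deg}$.

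The main obstacle I anticipate is the $(v,A)$ computation producing (\ref{eq2}): unlike the other components it simultaneously mixes $h$, the torsion one-form $s_A$ and second derivatives of $f$, and the coordinate Killing equation must be rewritten covariantly, so that the spurious Christoffel contributions coming from terms such as $s_C\,\partial_A(D^Cf)$ cancel and the surviving curvature pieces are absorbed via the Ricci identity together with the master equation (\ref{master}), which is used to trade the second derivatives $D_AD_Bf$ for first-derivative expressions. The only genuinely conceptual point, as opposed to computational bookkeeping, is the identification $\kappa_\zeta=\partial_v f$, which is what makes degeneracy remove all $v$-dependence and select the abelian family predicted by Theorem~\ref{th:algebra}; everything else is a disciplined, order-by-order reduction in powers of $u$.
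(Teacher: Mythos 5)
Your proposal is correct, and its skeleton coincides with the paper's: both impose the Killing equations of (\ref{NHG}) component by component in the Gaussian null coordinates, integrate the $(u,u)$, $(u,A)$ and $(u,v)$ components with the horizon initial conditions $\zeta^u|_{u=0}=\zeta^A|_{u=0}=0$ to obtain $\zeta^v=f$, $\zeta^A=-u\,D^Af$ and $\zeta^u$, and then read off (\ref{master}), (\ref{eq1}) and (\ref{eq2}) from the $(A,B)$, $(v,v)$ and $(v,A)$ components, with the trace identity $\Delta f = 2 s^A D_A f$ converting $\zeta^u$ into $\tfrac{u^2}{2}\Delta f$. The one genuinely different step is how the $v$-dependence of $f$ is eliminated. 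The paper invokes Theorem \ref{th:algebra}: since $\eta,\zeta \in \Kill^{deg}_{\H_\eta}$ and this sub-algebra is Abelian, $[\eta,\zeta]=0$ kills all $v$-derivatives of the components at the outset. You instead carry $\partial_v f$ through the integration ($\zeta^u = -u\,\partial_v f + u^2 s^A D_A f$), note that $\gNH(\zeta,\zeta) = -2uf\,\partial_v f + O(u^2)$ while $\bm{\zeta}|_{u=0} = f\,du$, and conclude from (\ref{SurGrav}) that $\kappa_\zeta = \partial_v f$ on the dense subset where $f \neq 0$, so that degeneracy forces $\partial_v f = 0$ (everywhere, by continuity). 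I verified this identification and it is correct. What your route buys: the argument becomes self-contained (it does not import the structure theorem of \cite{MPS}), and it makes transparent that it is \emph{degeneracy} of the horizon, not merely commutation with $\eta$, that removes the $v$-dependence and selects the Abelian family. What the paper's route buys: disposing of all $v$-derivatives (of $f$, $\zeta^u$ and $\zeta^A$ simultaneously) in one stroke, avoiding the extra $-u\,\partial_v f$ bookkeeping in the intermediate integrations. Your treatment of the remaining components and of the converse (reading the component identities backwards for a $v$-independently extended solution $f$, with $\zeta$ null, tangent and nonzero on the dense set $\{u=0,\,f\neq 0\}$) matches the paper's proof, including the correctly anticipated covariantization effort in the $(v,A)$ component leading to (\ref{eq2}).
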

\begin{remark}
Equation (\ref{master}) was found in full generality (for arbitrary MKHs) in \cite{MPS} and called the master equation.
\end{remark}
\begin{remark}
The  degenerate  Killing vector $\zeta$ given in (\ref{degKill}) has fixed points on $\H_{\eta}$ if and only if the function $f$ has zeros.
As described above the NHG computed from $\zeta$ is only defined where $\zeta$ has no fixed points, whence we will be mainly interested in the subset $\H_{\eta,\zeta}:=\{p\in \H_{\eta} : f(p)\ne 0\}$.
This will be relevant in Section~\ref{Uniqueness}.
\end{remark}
\begin{proof}
Let $\zeta \in \Kill_{\Htwo}^{deg}$, i.e.  any degenerate Killing generator
of $\Htwo$ in the metric (\ref{NHG}), and set
$$\zeta = f\partial_v + q \partial_u + \zeta^B \partial_B$$
there.
Then we know that
\be
q|_{u=0} =0, \hspace{1cm} \zeta^B|_{u=0}=0. \label{atu=0}
\ee
It follows from theorem \ref{th:algebra}  that $[\eta,\zeta]=0$ so that
$$
\partial_v f = \partial _v q = \partial_v \zeta^B =0.
$$
Consider the Killing equations
$$
(\lie_\zeta g)_{\mu\nu} = \zeta^\rho \partial_\rho g_{\mu\nu} +g_{\mu\rho}\partial_\nu \zeta^\rho +g_{\rho\nu}\partial_\mu \zeta^\rho =0.
$$
Letting $\mu =u$ (with some abuse of notation), these relations for $\nu =u,v,A$ become, respectively,
\begin{align}
& \partial_u f =0 \quad \quad  \Longrightarrow \quad \quad f=f(x^A) , \label{fxA}\\
& \partial_u q +2us_A \partial_u \zeta^A =0 , \label{qu}\\
& \partial_A f +\gamma_{AB} \partial_u \zeta^B =0 .\label{Df}
\end{align}
Similarly, letting $\mu =v$, the equations for $\nu =v,A$ become respectively
\begin{align}
& 2uqh+u^2 \zeta^B \partial_B h =0, \label{preeq1}\\
& 2qs_A +2u\zeta^B\partial_B s_A +\partial_A q +u^2 h \partial_A f +2 u s_B \partial_A \zeta^B =0.\label{preeq2}
\end{align}
Finally, for $\mu =A$ and $\nu =B$ we get
\be
\zeta^C \partial_C \gamma_{AB} +\gamma_{AC}\partial_B \zeta^C +\gamma_{CB}\partial_A \zeta^C +2u s_A \partial_B f +2u s_B \partial_A f =0. \label{premaster}
\ee
Given (\ref{fxA}) the function $f$ can be seen as a function on the cut $S_0$, and then $D_A f =\partial_A f$. Taking this into account together with (\ref{atu=0}), the solution of (\ref{Df}) reads
$$
\zeta^A =-u \gamma^{AB} D_B f = -u (\mbox{grad} f)^A
$$
and then (\ref{qu}) with (\ref{atu=0}) provides
\be
q = u^2 s^A D_A f
\ee
while (\ref{preeq1}) becomes (\ref{eq1}) and (\ref{preeq2}) becomes (\ref{eq2}). The remaining equation (\ref{premaster}) can now be written as
$$
-(\pounds_{grad \, f} \gamma)_{AB}+2 s_A D_B f+ 2 s_B D_A f =0
$$
which leads directly to (\ref{master}). To finish the proof is enough to note that the trace of (\ref{master}) gives
\be
2s^B D_B f = \Delta f  \label{Deltaf}.
\ee

For the converse one simply checks that all components (\ref{fxA})-(\ref{premaster}) of the Killing equation are satisfied assuming that
(\ref{master})-(\ref{eq2}) hold.
\end{proof}
Observe that the solution $f=$ const.\ provides the original Killing $\eta=\partial_v$. We can now derive the main result in this subsection.
\begin{theorem}\label{th:sexact}
Let $(M_{\mbox{{\tiny NHG}}},g_{\mbox{{\tiny NHG}}})$
be a near horizon geometry with metric (\ref{NHG}). Then, the vector field (\ref{degKill}), where $f$ is a smooth non-constant scalar in $S_0$, is a degenerate Killing generator of the chosen connected component $ \Htwo$ if and only if
the following two conditions hold:
\begin{itemize}
\item[(i)]  The differential of $f$ vanishes nowhere  and
the metric $\gamma$ on a cut $S_0\subset \{u=0\}= \Htwo$ (and therefore on any such cut)
admits a hypersurface-orthogonal Killing vector field
\be
\bm{\varsigma} =\frac{Q(f)}{N} df \label{killing},
\ee
where $N := \gamma^{\sharp}(df,df)$ is the square norm of $df$ and $Q(f)$
is a not identically zero solution of the system of ODEs
\begin{align}
\frac{dQ}{df} = Q(f) P(f), \quad \quad \frac{d}{d f} \left ( \frac{d P}{df} + P^2 \right ) + P \left ( \frac{dP}{df} + P^2 \right ) =0.
\label{ODEs}
\end{align}
 \item[(ii)] The torsion one-form $\bm{s}$ and metric coefficient $h$ take the form
   \begin{align}
\bm{s} = \frac{1}{2} \left ( \frac{dN}{N} - P df \right ), \quad \quad
h = \frac{1}{2} N \left ( \frac{dP}{df} + P^2 \right ).
\label{sh}
\end{align}
In particular, $\bm{s}$ is closed.
\end{itemize}
\end{theorem}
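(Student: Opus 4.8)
The plan is to use Proposition~\ref{prop:KillingEqs} as a bridge: the field (\ref{degKill}) is a degenerate Killing generator of $\Htwo$ if and only if the non-constant $f$ solves the master equation (\ref{master}) together with (\ref{eq1}) and (\ref{eq2}). So it suffices to prove that, for non-constant $f$, this system is equivalent to (i)--(ii). I would organise the forward direction so that each of the three equations contributes exactly one structural ingredient, and obtain the converse by reading the same computations backwards.

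First I would extract everything implied by the master equation alone. Writing $X_A:=D_A f$, equation (\ref{master}) reads $D_B X_A = s_B X_A + s_A X_B$, which along any curve in the connected cut $S_0$ is a linear homogeneous ODE for $X$; by uniqueness of solutions, if $X$ vanished at one point it would vanish on all of $S_0$, forcing $f$ to be constant. Hence $df$ is nowhere zero, and since $(S_0,\gamma)$ is Riemannian, $N=\gamma^{\sharp}(df,df)>0$ everywhere. Next, differentiating $N$ and using (\ref{master}) gives $D_A N = 2N s_A + 2(s^B D_B f)\,D_A f$, which together with the trace identity (\ref{Deltaf}) rearranges into $\bm{s}=\tfrac12\big(\tfrac{dN}{N}-P\,df\big)$, where I set $P:=\Delta f/N = 2 s^B D_B f/N$. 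This already reproduces the first relation in (\ref{sh}), except that $P$ is so far merely a function on $S_0$ rather than a function of $f$; as a by-product one also gets $\langle dN,df\rangle = 2N^2 P$.

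The crux is the analysis of (\ref{eq2}). Substituting the derived form of $\bm{s}$ --- which in particular turns the antisymmetrised term into $D_A s_B - D_B s_A = -\tfrac12\big(D_A P\,D_B f - D_B P\,D_A f\big)$ --- a direct computation collapses (\ref{eq2}) into the single relation $h\,D_A f = \tfrac32 N\,D_A P + \big(\tfrac12 N P^2 - \langle dP,df\rangle\big)D_A f$. Since the left-hand side is proportional to $D_A f$ and $N>0$, this forces $D_A P \propto D_A f$, i.e. $dP\wedge df=0$, so $P=P(f)$; equivalently $d\bm{s}=-\tfrac12\,dP\wedge df=0$, which is the closedness asserted at the end of the statement. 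Feeding $D_A P = P'(f)D_A f$ back into the same relation immediately yields $h=\tfrac12 N\big(P'+P^2\big)$, the second formula in (\ref{sh}). I would then turn to (\ref{eq1}): rewritten as $\langle df,dh\rangle = hNP$ and combined with this $h$ and with $\langle dN,df\rangle = 2N^2 P$, it reduces to $N^2\big[(P'+P^2)'+P(P'+P^2)\big]=0$, which is exactly the second ODE in (\ref{ODEs}). The first ODE $dQ/df=QP$ is just the definition used to manufacture $\bm{\varsigma}$: with $\phi:=Q/N$ one checks $D_A\ln\phi=-2s_A$, whence $\bm{\varsigma}=\phi\,df$ obeys $D_A\varsigma_B+D_B\varsigma_A=0$ by (\ref{master}); being proportional to the exact form $df$, it is automatically hypersurface-orthogonal with the level sets of $f$ as orthogonal hypersurfaces.

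For the converse I would run these steps in reverse. Given (i)--(ii), the Killing equation for $\bm{\varsigma}=\tfrac{Q}{N}df$, together with $D_A\ln(Q/N)=-2s_A$ (equivalent to the stated form of $\bm{s}$ via $dQ/df=QP$), yields (\ref{master}) after dividing by the nowhere-vanishing factor $Q/N$; the prescribed forms $P=P(f)$ and $h=\tfrac12 N(P'+P^2)$ reconstitute (\ref{eq2}) through the intermediate relation above, and the second ODE reconstitutes (\ref{eq1}). Proposition~\ref{prop:KillingEqs} then certifies that (\ref{degKill}) is a degenerate Killing generator. I expect the main obstacle to be precisely the algebraic reduction of (\ref{eq2}): the fact that a single identity simultaneously delivers the closedness of $\bm{s}$ (hence $P=P(f)$) and the explicit expression for $h$ is where the bookkeeping is heaviest, and where the second-derivative and curvature terms must be tracked with care.
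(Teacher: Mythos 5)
Your proposal is correct and follows essentially the same route as the paper: reduce via Proposition~\ref{prop:KillingEqs} to the system (\ref{master})--(\ref{eq2}), deduce $df\neq 0$ and the form of $\bm{s}$ from the master equation, extract $dP\wedge df=0$ (hence closedness of $\bm{s}$) and $h=\tfrac12 N(P'+P^2)$ from (\ref{eq2}), obtain the second ODE from (\ref{eq1}), define $Q$ by the first ODE to exhibit the hypersurface-orthogonal Killing field, and reverse the computations for the converse. The only cosmetic difference is that you collapse (\ref{eq2}) into the single identity $h\,D_A f = \tfrac32 N D_A P + (\tfrac12 N P^2 - \langle dP,df\rangle)D_A f$, which delivers $P=P(f)$ and $h$ simultaneously, whereas the paper first contracts (\ref{eq2}) with $D^A f$ to get $h$ and then uses the full equation to establish the proportionality $D_A(\Delta f/N)\propto D_A f$ --- the content is the same.
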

\begin{proof}
Assume that the vector field (\ref{degKill})  is a degenerate Killing vector of $\{u=0\}$ in
$(M_{\mbox{{\tiny NHG}}},g_{\mbox{{\tiny NHG}}})$.
From Proposition \ref{prop:KillingEqs},   equations (\ref{master}--\ref{eq2}) are satisfied. We first observe that (\ref{master}) implies that
if $D_A f$ vanishes at a point then it vanishes everywhere. Since by assumption $f$ is non-constant we conclude that $df\neq 0$ everywhere on $S_0$.
Contracting (\ref{master}) with $D^Bf$ one then derives
%
%
\be
s_A= \frac{1}{2} D_A(\ln N) -\frac{1}{2} \frac{D_Af}{N} \Delta f \label{s1}
\ee
where we have used (\ref{Deltaf}), and the definition of $N$.
 Contracting here with $D^Af$ gives
\be
D^Af D_AN = 2N \Delta f  \label{Deltaf1}.
\ee
A similar contraction of (\ref{eq2}) using (\ref{Deltaf}) provides
\be
h=\frac{1}{2} \frac{D^Af D_A(\Delta f)-(\Delta f)^2}{N}.\label{h1}
\ee
Equation (\ref{s1}) can be rewritten as
\be
\bm{s} = \frac{1}{2} d \ln N - \frac{1}{2} \frac{\Delta f}{N} df \label{s2}
\ee
from where we derive
\be
d\bm{s} =-\frac{1}{2} d\left(\frac{\Delta f}{N} \right) \wedge df \label{ds}.
\ee
Contraction with grad$f$ here leads to
$$
2D^B f  \left(D_B s_A -D_A s_B \right)=-D^Bf D_B \left(\frac{\Delta f}{N} \right)D_A f + N  D_A \left(\frac{\Delta f}{N} \right)
$$
and introducing this into eq.(\ref{eq2}), making use of (\ref{h1}), (\ref{s1}) and
(\ref{Deltaf1}), and after a little calculation, we get
\be
\frac{1}{N^2} \left[D^Bf D_B(\Delta f) - 2 (\Delta f)^2\right] D_A f =  D_A \left(\frac{\Delta f}{N} \right) .\label{24}
\ee
This informs us that
$$
\Delta f/N :=P(f)
$$
is a function of $f$ so that (\ref{s2}) takes the form given in (\ref{sh}) and, from (\ref{ds}) follows that
$\bm{s}$ is closed. Combining (\ref{h1}) with (\ref{24}) and this notation we arrive at
\begin{equation}
h =\frac{N}{2} \left(\frac{dP}{df} +P^2 \right) \label{h2bis}
\end{equation}
which is the second in (\ref{sh}). The remaining equation is (\ref{eq1}), which on using (\ref{h2bis}) and (\ref{Deltaf1}) leads, after another computation, to the second in (\ref{ODEs})
$$
\frac{d}{df} \left(\frac{dP}{df} +P^2 \right)+P \left(\frac{dP}{df}+P^2\right) =0.
$$
Note that, introducing the function $Q(f)$ as defined in the first of (\ref{ODEs}),
\begin{align}
\bm{s} = \frac{1}{2} \frac{Q}{N} d \left ( \frac{N}{Q} \right ) \label{exps}
\end{align}
so that the master equation becomes, after rearranging and dividing by $N/Q$
\begin{equation}
D_A\left( \frac{Q}{N} D_B f \right)+ D_B\left( \frac{Q}{N} D_A f\right)=0.
\label{Killequation}
\end{equation}
This states that (\ref{killing}) defines a hypersurface
orthogonal Killing vector of $(S_0,\gamma)$ and the only if
part of the proof is completed.

For the converse, we assume that (i) and (ii) hold.
The master equation (\ref{master}) is satisfied due to
(\ref{Killequation}). Concerning (\ref{eq1}) and (\ref{eq2}), we first
observe that
$$
\varsigma(f) = \frac{Q}{N} \gamma^\sharp (df,df) = Q$$
hence
\begin{align}
\varsigma(N) =
2 \varsigma^B (D_A D_B f ) D^A f =
2 D^A f D_A ( \varsigma^B D_B f) = 2 D^A f D_A Q(f) = 2 Q P N
\label{inter}
\end{align}
where in the second equality we used that $\varsigma$ is a Killing vector.
An equivalent way to state (\ref{inter}) is
\begin{align}
D^A f D_A N = 2 P N^2.
\label{dfdN}
\end{align}
Similarly, one has
\begin{align}
s^A D_A f = \frac{1}{2} \left ( \frac{D_A N}{N} - P(f) D_A f \right ) D^A f = \frac{1}{2} P N
\label{sDf}
\end{align}
so that, using that $\bm{s}$ is closed, the right-hand
 side of
(\ref{eq2}) becomes
\begin{align*}
D_A (s^B D_B f) - 2 s_A s^B D_B f & =
D_A \left ( \frac{1}{2} P N \right ) - \left ( \frac{D_A N}{N} - P D_A f \right)
\frac{1}{2} P N \\
& =
\frac{1}{2} N \left ( \frac{dP}{df} + P^2 \right ) D_A f
\end{align*}
and (\ref{eq2}) holds because  $h$ is given by  (\ref{sh}).
Finally, we check (\ref{eq1}):
\begin{align*}
D^A f D_A h & - 2 h s^A D_A f =
\frac{1}{2} D^A f D_A N \left ( \frac{dP}{df} + P^2 \right )
+ \frac{1}{2} N D^A f \frac{d}{df} \left (
\frac{dP}{df} + P^2 \right ) D_A f \\
& -  N \left ( \frac{dP}{df} + P^2 \right ) \frac{1}{2} NP
=
\frac{1}{2} N^2 \left [
\frac{d}{df} \left ( \frac{dP}{df} + P^2 \right ) + P
\left ( \frac{dP}{df} + P^2 \right ) \right ] =0
\end{align*}
where in the second equality we used (\ref{dfdN}) and in the last one
(\ref{ODEs}).
In summary, equations (\ref{master})-(\ref{eq2}) hold and, by Proposition
\ref{prop:KillingEqs}, the vector field (\ref{degKill}) is a degenerate Killing generator of $\{u=0\}\subset \Htwo$. This proves the converse, because
the function $f$ is by assumption
non-constant (in fact with nowhere zero gradient).
\end{proof}

\begin{remark}
\label{remarka}
By Theorem~\ref{th:sexact} $\bm{s}$ is closed.
Note, though, that under the conditions of this theorem, equations
(\ref{exps}), (\ref{inter}), (\ref{dfdN}), (\ref{sDf})  and
\begin{align*}
\Delta f = 2 s^A D_A f = P N
\end{align*}
hold true.  Given that $N$ vanishes nowhere, this equation implies in particular
that $P(f)$ is smooth everywhere on $S_0$.
Therefore, the first equation  in (\ref{sh}) combined with the first in (\ref{ODEs}) states that $\bm{s}$ is exact on $S_0$.

It seems important to emphasize that to establish exactness of $\bm{s}$  it is crucial that $\eta$ does, by definition of the NHG, not have  fixed points on   $\H_{\mbox{{\tiny NHG}}}$.
In \cite{MPS2} we will consider a similar setting, where, though, fixed points are possible, and in that case one can only deduce that $\bm{s}$ is exact on cuts of
$\H_{\eta}$ as proper subsets of cuts of  $\H_{\mbox{{\tiny NHG}}}$.
The reason for that is that the gauge (Gaussian null coordinates) becomes  singular at the fixed points of $\eta$.
This will be analyzed in more detail in \cite{MPS2}.
 \end{remark}

In the next lemma, we find the most general solution of the ODE system
(\ref{ODEs}).
\begin{lemma}
\label{ODESol}
 $Q(f)$ and $P(f)$ solve the system (\ref{ODEs}) if and only if,  with $Q_0 \in \mathbb{R}\setminus\{0\}$,
they belong to  one of the following three exclusive cases:
\begin{itemize}
\item[(a)] $P=0$, $Q= Q_0$, and then
$\bm{s} = dN/(2N),\, \,  h =0$.
\item[(b)] $P = 1/(f+c)$, $Q = Q_0 (f+c)$ with
$c \in \mathbb{R}$, and then
\begin{align*}
\bm{s} = \frac{1}{2} \left (
\frac{dN}{N} - \frac{df}{f+c}  \right ), \quad \quad h =0.
\end{align*}
\item[(c)] $\displaystyle{P = \frac{2 (f+c)}{b + (f+c)^2}}$ and $Q = Q_0 [b + (f+c)^2]$
with $b,c \in \mathbb{R}$. Then
\begin{align}
\bm{s} = \frac{1}{2} \left ( \frac{dN}{N}
- \frac{2 (f+c)}{b + (f+c)^2} df \right ), \quad \quad
h = \frac{N}{b + (f+c)^2}.
\label{expshc}
\end{align}
\end{itemize}
\end{lemma}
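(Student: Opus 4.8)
The plan is to exploit that the two equations in (\ref{ODEs}) play very different roles: the first is a linear first-order ODE for $Q$ with $P$ as coefficient, while the second involves $P$ alone. Integrating the first gives $Q(f)=Q_0\exp\left(\int^f P(\tilde f)\,d\tilde f\right)$, and since $Q_0\neq 0$ this $Q$ is \emph{nowhere vanishing}; hence $P=Q'/Q$, where a prime denotes $d/df$. This lets me trade the pair $(P,Q)$ for the single unknown $Q$ and rewrite everything in terms of $Q$.

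The key step is then the elementary identity $\dfrac{dP}{df}+P^2=\dfrac{Q''}{Q}$, which follows at once from $P=Q'/Q$. Substituting it into the second equation of (\ref{ODEs}) and using $P=Q'/Q$ once more yields $\dfrac{d}{df}\left(\dfrac{Q''}{Q}\right)+\dfrac{Q'}{Q}\,\dfrac{Q''}{Q}=\dfrac{Q'''}{Q}$, so the second ODE is \emph{equivalent} to $Q'''=0$. (Equivalently, setting $W:=dP/df+P^2$, the second ODE reads $W'+PW=0$, whose integrating factor is again $\exp(\int P)=Q/Q_0$, forcing $WQ$ to be constant, i.e. $Q''$ constant.) I expect this reduction to be the crux of the argument: it collapses a nonlinear second-order equation into the statement that $Q$ is a polynomial in $f$ of degree at most two.

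It then remains to integrate $Q'''=0$ to get $Q=Af^2+Bf+C$ and to split into the three mutually exclusive cases according to $\deg Q$: case (a) $A=B=0$, so $Q=Q_0=C\neq 0$ is constant and $P=0$; case (b) $A=0$, $B\neq 0$, so $Q=Q_0(f+c)$ and $P=1/(f+c)$; and case (c) $A\neq 0$, so after completing the square and renaming constants $Q=Q_0[b+(f+c)^2]$ with $P=2(f+c)/[b+(f+c)^2]$. In each case the claimed forms of $\bm{s}$ and $h$ follow directly from (\ref{sh}) together with $h=\frac{N}{2}\,Q''/Q$: here $Q''=0$ in cases (a)--(b) gives $h=0$, while $Q''=2Q_0$ in case (c) gives $h=N/[b+(f+c)^2]$. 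The converse, that each listed pair indeed solves (\ref{ODEs}), is a direct substitution. The only point demanding care is the justification that $Q$ has no zeros, so that $P=Q'/Q$ is legitimate and smooth; this is precisely what $Q_0\neq 0$ secures through the integrated form of the first ODE, and it is also what makes the three cases genuinely exclusive (they are distinguished by $\deg Q$).
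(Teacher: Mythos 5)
Your proposal is correct, and it takes a genuinely different (and arguably cleaner) route than the paper. The paper first splits off $P\equiv 0$ (case (a)), then defines $W$ through $dP/df+P^2=:WP$, so the second equation of (\ref{ODEs}) becomes $P\left(dW/df+W^2\right)=0$, whence $W=0$ (leading to case (b)) or $W=1/(f+c)$; in the latter case it solves the resulting Riccati equation for $P$ by invoking the first equation, i.e. $P=Q^{-1}dQ/df$, which linearizes it to $d^2Q/df^2=(f+c)^{-1}\,dQ/df$ and yields case (c). You instead make the logarithmic-derivative substitution $P=Q'/Q$ from the outset --- legitimate, as you note, because every solution of the linear equation $Q'=PQ$ is $Q_0\exp\bigl(\int^f P\bigr)$, so a not-identically-zero $Q$ is nowhere vanishing --- and then the identity $P'+P^2=Q''/Q$ collapses the second ODE to $Q'''=0$ in a single stroke; the trichotomy (a)/(b)/(c) is then just the classification of the quadratic polynomial $Q$ by its degree after completing the square, and the formulas for $\bm{s}$ and $h$ drop out of (\ref{sh}) via $h=\frac{N}{2}\,Q''/Q$, with $Q''=0$ in cases (a)--(b) and $Q''=2Q_0$ in case (c). Your version buys two things: the mutual exclusivity of the cases is transparent ($\deg Q$ is the discriminating invariant, consistent with the scaling corollary following Lemma \ref{scaling}), and you never divide by $P$ --- note that in case (c) with $b>0$ the paper's auxiliary function $W=1/(f+c)$ blows up precisely where $P$ vanishes (at $f=-c$), so the paper's intermediate manipulations hold only away from the zeros of $P$, whereas your reduction to $Q'''=0$ is valid everywhere on the domain. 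What the paper's route keeps is $P$, the quantity directly entering $\bm{s}$ and $h$, as the primary unknown until the very end, but both arguments terminate in the same elementary integration, and your converse-by-substitution closing step matches the paper's.
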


\begin{proof}
The expressions for $\bm{s}$ and $h$ in each case follow directly from (\ref{sh}) by simple substitution.
First of all,
$P=0$ is clearly a solution of (\ref{ODEs}). In such case $Q$ is a
non-zero constant and we fall into case $(a)$.
Assume then that $P$ is not identically zero and define $W(f)$ by
\begin{align*}
\frac{dP}{df} + P^2 := W P.
\end{align*}
The second equation in (\ref{ODEs}) is then
\begin{align*}
\frac{d}{df} \left ( W P  \right ) + P^2 W
= P \frac{d W}{df} + \frac{dP}{df} W + P^2 W =
P \left ( \frac{dW}{df} + W^2 \right ) =0.
\end{align*}
It is immediate that the general solution of this ODE is either $W=0$ or $W = 1/(f+c)$
where $c$ is a real constant. The first case corresponds to $(b)$ because
\begin{align*}
\frac{dP}{df} + P^2 =0  \quad ( \mbox{with} \quad P \neq 0)
\quad \quad
\Longleftrightarrow \quad \quad  P = \frac{1}{f+c}
\end{align*}
and   the integration of the first in (\ref{ODEs}) gives $Q = Q_0 ( f+c)$.

It remains the case $W = (f+c)^{-1}$,
 which will be (c). We need to solve
\begin{align*}
\frac{dP}{df} + P^2 = W P = \frac{P}{f+c}.
\end{align*}
This is a Ricatti  equation and its solution is easily found by  introducing here the first in (\ref{ODEs}) which yields
\begin{align}
\frac{d^2 Q}{df^2} = \frac{1}{(f+ c)} \frac{dQ}{df}
 \quad  \Longleftrightarrow \quad \frac{dQ}{df} = 2 Q_0 (f+c) \Longleftrightarrow
Q = Q_0 \left ( b + (f+c)^2 \right ), \quad b \in \mathbb{R}.
\label{Qcasec}
\end{align}
The condition $Q_0 \neq 0$ is required because
otherwise $\displaystyle{P= Q^{-1} \frac{dQ}{df}}$ would vanish identically and we would fall
into a previous case.  From the expression (\ref{Qcasec}) of $Q$ one immediately
finds
$ P= 2 (f+c)/(b + (f+c)^2)$ and that (\ref{expshc}) holds.
\end{proof}

\begin{remark}
  \label{remarkb}
As shown in Remark \ref{remarka}, $P (f)$ is smooth on $S_0$.
This implies that $f+c$ has a definite sign on $S_0$ in case $(b)$ and that
$b + (f+c)^2$ also has a definite sign in case $(c)$. Consequently $Q(f)$
vanishes nowhere and $\bm\varsigma$ has no fixed points. We can define a smooth
 positive function $M$ on $S_0$ by
 $$
 N := Q^2 M.
 $$
 This function  is invariant under the flow of $\varsigma$, that is, $\varsigma (M)=0$ ---equivalently it
satisfies $D^A f D_A M=0$--- as follows from
\begin{align*}
D^A f D_A M = D^A f D_A \left ( \frac{N}{Q^2} \right )
= \frac{2 P N^2}{Q^2} - \frac{2N}{Q^3} \frac{dQ}{df} D_A f D^A f =0.
\end{align*}
This informs us that $M$ is independent of $f$, and thus either $dM=0$ or $dM \wedge df \neq0$.
\end{remark}

\begin{definition}
Let $S_0$ be a cut of a connected component $\{u=0\}= \Htwo$ of the MKH of a NHG with local metric (\ref{NHG}). By $\Kill_{S_0}\subset \mathfrak{X}(S_0)$ we denote a collection of vector fields
$\varsigma\in \mathfrak{X}(S_0)$ which are Killing vectors of $(S_0,\gamma)$  and take the form
(\ref{killing}) with either $df=0$ (yielding the zero vector field, which we call trivial) or with $df$ nowhere zero,
$N = \gamma^{\sharp}(df,df)$ and $(Q(f),P(f))$ solving
(\ref{ODEs}) (ergo given by the explicit forms of lemma \ref{ODESol}), such that
(\ref{sh}) holds with fixed $\bm{s}$ and $h$.
\end{definition}

\begin{remark}
Note that $\Kill_{S_0}$ depends via (\ref{sh})  on $\bf{s}$ and $h$.  Different NHGs thus  may  select different Killing vectors $\varsigma$
which may even produce $\Kill_{S_0}$'s of  different dimension
(anticipating that  $\Kill_{S_0}$ is a vector space, cf.\ Proposition~\ref{VectorSpace} below).
\end{remark}

The following lemma shows that given any non-trivial  $\varsigma \in
\Kill_{S_0}$  the functions $f$ and $Q(f)$ are defined uniquely
up to a constant rescaling.

\begin{lemma}
\label{scaling}
Let $\varsigma \in \Kill_{S_0}$ be non-trivial
and let $f_a: S_0 \rightarrow
\mathbb{R}$ and $Q_a(f_a)$, $a=1,2$ be such that (\ref{killing}) and
(\ref{ODEs}) hold. Then there exist constants $\alpha,\beta$ with $\alpha
\neq 0$ such that $f_1 = \alpha f_2 + \beta$ and $Q_1 (f_1) = \alpha Q_2 (f_2)$. Furthermore, with obvious notations, $P_2(f_2)=\alpha P_1(f_1)$.
\end{lemma}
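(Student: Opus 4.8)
The plan is to exploit that the two data sets $(f_a,Q_a(f_a))$, $a=1,2$, describe \emph{one and the same} vector field $\varsigma$ through (\ref{killing}), and---crucially---that, since $\varsigma\in\Kill_{S_0}$, both are built with the \emph{same} torsion one-form $\bm{s}$ and scalar $h$ entering (\ref{sh}). (This common datum is essential: if only $\varsigma$ were shared the conclusion would fail, a logarithmic reparametrization relating case $(a)$ to case $(b)$ of Lemma~\ref{ODESol} being possible.) First I would compare differentials. From $\bm{\varsigma}=\frac{Q_1}{N_1}\,df_1=\frac{Q_2}{N_2}\,df_2$, together with $N_a>0$ and the fact (Remark~\ref{remarkb}) that $Q_a$ vanishes nowhere, the one-forms $df_1$ and $df_2$ are everywhere proportional by a nowhere-zero factor. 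As $\varsigma$ is non-trivial, $df_2\neq 0$ everywhere, so on each chart $f_1$ is a function of $f_2$, say $f_1=F(f_2)$ with $F'\neq 0$; connectedness of $S_0$ and the fact that $f_2$ sweeps an interval allow me to glue these into a single $F$.

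I would then extract two scalar identities. Recalling $\varsigma(f_a)=Q_a$ (shown in the proof of Theorem~\ref{th:sexact}), applying $\varsigma$ to $f_1=F(f_2)$ gives
\[
Q_1(f_1)=F'(f_2)\,Q_2(f_2).
\]
For the second, contract the common $\bm{s}$ with $\varsigma$: since $\varsigma^A=\frac{Q}{N}D^Af$ and $s^AD_Af=\frac12 PN$ by (\ref{sDf}), one gets $\bm{s}(\varsigma)=\frac12 Q_aP_a=\frac12\,\frac{dQ_a}{df_a}$, the last step by the first ODE in (\ref{ODEs}). Because $\bm{s}$ and $\varsigma$ are the same for $a=1,2$, this forces
\[
\frac{dQ_1}{df_1}=\frac{dQ_2}{df_2}\qquad\text{as functions on }S_0.
\]

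To finish, I would treat $f_2$ as a local coordinate and differentiate the first identity with respect to $f_2$; substituting the second identity and using $Q_2\neq 0$ yields $F''=0$. Hence $f_1=\alpha f_2+\beta$ with $\alpha=F'\neq 0$, the first identity then gives $Q_1(f_1)=\alpha\,Q_2(f_2)$, and from $P_a=Q_a^{-1}\,dQ_a/df_a$ one reads off $P_2=\alpha P_1$.

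The main obstacle is conceptual rather than computational: one must make sure to use the full strength of membership in $\Kill_{S_0}$, namely that $\bm{s}$ (equivalently $h$) is held fixed, since $\varsigma$ by itself does \emph{not} determine the parametrization. The remaining delicate points are the globalization from pointwise proportionality of $df_1,df_2$ to a single affine relation on the connected cut $S_0$, and the availability of $Q_a\neq 0$ (hence smooth $P_a$), which rests on Remark~\ref{remarkb} and ultimately on $\eta$ having no fixed points on $\H_{\mbox{\tiny NHG}}$.
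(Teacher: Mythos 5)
Your proof is correct, and it rests on the same essential hypothesis as the paper's proof --- that both parametrizations $(f_a,Q_a)$ reproduce the \emph{same} $\bm{s}$ via (\ref{sh}) --- but it extracts that information in a genuinely different way. The paper uses the full one-form identity: by (\ref{exps}), $\bm{s}=\frac{1}{2Z_a}dZ_a$ with $Z_a:=N_a/Q_a$ globally defined and nowhere zero, whence $Z_1=\alpha Z_2$ for a constant $\alpha\neq 0$; since $\bm\varsigma=df_a/Z_a$ this gives $df_1=\alpha\,df_2$ at once, and the scalings of $Q$ and $P$ follow from $N_1=\alpha^2 N_2$ and the first of (\ref{ODEs}). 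You instead use only the scalar contraction $\bm{s}(\varsigma)=\frac{1}{2}Q_aP_a=\frac{1}{2}\,dQ_a/df_a$ together with $\varsigma(f_a)=Q_a$, and convert the comparison into the second-order condition $F''=0$; it is a nice observation that this strictly weaker input (one scalar identity rather than the whole one-form) already suffices, and your remark that sharing $\varsigma$ alone would permit the logarithmic case-(a)$\to$(b) reparametrization correctly isolates why the fixed $\bm{s}$ is indispensable --- the paper's proof invokes it at exactly the same spot, through (\ref{exps}). What the paper's route buys is brevity and automatic globality: $Z_1/Z_2$ is a globally defined function with vanishing differential on the connected cut $S_0$, so one constant appears immediately. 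Your gluing step is the one soft point: level sets of $f_2$ need not be connected, so a single-valued $F$ on the range of $f_2$ is not immediate from local charts. But your argument does not actually need a global $F$: the proportionality factor $\lambda$ in $df_1=\lambda\,df_2$ equals $Q_1(f_1)/Q_2(f_2)$ (apply $\varsigma$ to both sides and use $\varsigma(f_a)=Q_a$), hence is globally defined and nowhere zero, and your two identities give directly $d\lambda=\frac{1}{Q_2^2}\frac{dQ_2}{df_2}\left(\lambda Q_2-Q_1\right)df_2=0$, so connectedness makes $\lambda$ a constant $\alpha$ and $d(f_1-\alpha f_2)=0$ closes the argument. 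With that cosmetic repair the proof is complete; your appeals to Remark \ref{remarkb} (for $Q_a\neq 0$) and to (\ref{sDf}), (\ref{dfdN}) (valid for each parametrization because $\varsigma$ is Killing and (\ref{sh}) holds with the fixed $\bm{s}$) are all legitimate.
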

\begin{proof}
Let $Z_a := N_a/Q_a$ for $a\in \{1,2\}$. Each $Z_a$  is
nowhere zero and defined everywhere on $S_0$. From (\ref{exps}) it follows
\begin{align*}
\bm{s} = \frac{1}{2Z_1} dZ_1 = \frac{1}{2Z_2} dZ_2 \quad \quad
\Longrightarrow \quad \quad Z_1 = \alpha Z_2
\end{align*}
where $\alpha$ is a non-zero constant. Since $\displaystyle{\bm\varsigma = \frac{df_1}{Z_1}
= \frac{df_2}{Z_2}}$  it must be $df_1 = \alpha df_2$ and hence
$f_1 = \alpha f_2 + \beta$.  Expression $\alpha  Q_2(f_2) =
Q_1(f_1)$
 is now immediate because $N_1 = \alpha^2 N_2$. The first in (\ref{ODEs}) for each $Q_a$ then gives $Q_2P_2 =Q_1 P_1$ which provides $P_2(f_2)=\alpha P_1(f_1)$.  
\end{proof}
 Observe that the invariance of $h$ follows easily from the second in (\ref{sh}).

Combining this with lemma \ref{ODESol} the following corollary follows
by a simple computation.
\begin{corollary}
Letting $Q_0^{(a)}$, $c_a$ and $b_a$ be the constants defined by Lemma \ref{ODESol}
from the explicit form of $Q_a(f_a)$ in each of the cases (note that
the scaling transformation defined by $\alpha, \beta$ cannot change the case), the following scaling law is obtained
\begin{align*}
& \mbox{Case (a)} \quad \quad \quad Q_0^{(2)} = \frac{1}{\alpha} Q_0^{(1)}, \\
& \mbox{Case (b)} \quad \quad \quad Q_0^{(2)} = Q_0^{(1)},
\quad c_2 = \frac{c_1 + \beta}{\alpha}, \\
& \mbox{Case (c)} \quad \quad \quad Q_0^{(2)} = \alpha Q_0^{(1)},  \quad
c_2 = \frac{c_1 + \beta}{\alpha}, \quad b_2 = \frac{b_1}{\alpha^2}.
\end{align*}
\end{corollary}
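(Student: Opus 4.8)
The plan is to reduce the entire statement to a direct substitution, using the two relations supplied by Lemma~\ref{scaling}, namely $f_1 = \alpha f_2 + \beta$ (with $\alpha \neq 0$) and $Q_1(f_1) = \alpha Q_2(f_2)$, together with the explicit polynomial forms of $Q$ furnished by Lemma~\ref{ODESol}. Concretely, in each case I would write $Q_1$ as a function of $f_1$ in its canonical form, replace $f_1$ by $\alpha f_2 + \beta$, divide by $\alpha$ to obtain $Q_2$ as a function of $f_2$, and then read off the transformed constants by comparing with the canonical form of $Q_2$.

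Before that, I would justify the parenthetical claim that the affine reparametrization $f_1 = \alpha f_2 + \beta$ cannot move a solution from one case to another, since otherwise the constants $Q_0^{(a)}, c_a, b_a$ would not be simultaneously defined for $a=1,2$. The cleanest argument is that in cases (a), (b), (c) the function $Q$ is, respectively, a polynomial of degree $0$, $1$, $2$ in its argument, and the degree of a polynomial is invariant under an affine change of variable with $\alpha \neq 0$. Equivalently, one may invoke the relation $P_2(f_2) = \alpha P_1(f_1)$ from Lemma~\ref{scaling}, which gives $P_1 \equiv 0 \Leftrightarrow P_2 \equiv 0$ (isolating case (a)) and preserves the rational form of $P$ in the remaining cases. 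Either way, $a=1$ and $a=2$ fall in the same case.

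The three computations are then purely algebraic. In case (a) both $Q_a$ are constant, so $Q_0^{(1)} = \alpha Q_0^{(2)}$ gives $Q_0^{(2)} = Q_0^{(1)}/\alpha$ at once. In case (b), factoring $f_1 + c_1 = \alpha\bigl(f_2 + (c_1+\beta)/\alpha\bigr)$ turns $Q_1 = Q_0^{(1)}(f_1+c_1)$ into $Q_2 = \alpha^{-1} Q_1 = Q_0^{(1)}\bigl(f_2 + (c_1+\beta)/\alpha\bigr)$, so comparison with $Q_2 = Q_0^{(2)}(f_2+c_2)$ yields $Q_0^{(2)} = Q_0^{(1)}$ and $c_2 = (c_1+\beta)/\alpha$. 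In case (c), the same completion of the square $b_1 + (f_1+c_1)^2 = \alpha^2\bigl[\,b_1/\alpha^2 + (f_2 + (c_1+\beta)/\alpha)^2\,\bigr]$ gives, after dividing by $\alpha$, exactly $Q_0^{(2)} = \alpha Q_0^{(1)}$, $c_2 = (c_1+\beta)/\alpha$ and $b_2 = b_1/\alpha^2$.

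I do not anticipate any genuine obstacle here: once the degree-preservation remark secures that both parametrizations sit in the same case, the result is a short coefficient comparison. The only point requiring a moment's care is the bookkeeping of the powers of $\alpha$, in particular keeping track of the extra factor $\alpha$ that $Q_1$ carries relative to $Q_2$ against the factor $\alpha^2$ produced by the quadratic term in case (c), which is precisely what generates the law $b_2 = b_1/\alpha^2$.
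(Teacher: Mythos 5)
Your proposal is correct and is precisely the ``simple computation'' the paper alludes to: it combines Lemma~\ref{scaling} ($f_1=\alpha f_2+\beta$, $Q_1(f_1)=\alpha Q_2(f_2)$) with the explicit forms of Lemma~\ref{ODESol} and reads off the constants by coefficient comparison, with the degree-invariance (or $P_2=\alpha P_1$) remark correctly securing that both parametrizations lie in the same case. All three algebraic identifications, including the factor $\alpha^2$ from completing the square in case~(c) that produces $b_2=b_1/\alpha^2$, check out.
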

As a consequence of these results, there exists a smooth non-zero function
$Z$  such that the {\em exact} torsion one-form reads
\be
\bm{s}=\frac{1}{2Z} dZ \label{s=dz}
\ee
and all non-trivial elements $\varsigma \in \Kill_{S_0}$ can be represented by functions $f$, $Q(f)$ satisfying
\begin{align*}
\frac{\gamma^{\sharp}(df,df)}{Q(f)} =Z.
\end{align*}
The function $Z$ is, itself, defined up to a constant rescaling but the
point is that once $Z$ is fixed once and for all, this choice is made independently of $f$. Therefore,
such a prescription freezes the scaling freedom defined by $\alpha$ in Lemma \ref{scaling}.
We shall make this choice from now on.

\begin{proposition} \label{VectorSpace} $\Kill_{S_0}$ is vector space
  contained in the set of hypersurface-orthogonal Killing vectors of $(S_0,\gamma)$ .
\end{proposition}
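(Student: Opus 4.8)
The plan is to reduce the statement to the linearity of the system (\ref{master})--(\ref{eq2}). The containment is the easy half: every $\varsigma\in\Kill_{S_0}$ is by definition a Killing vector of $(S_0,\gamma)$ whose metric dual is, in the non-trivial case, $\bm{\varsigma}=\frac{Q(f)}{N}\,df$, i.e.\ a smooth function times the exact one-form $df$. Any one-form proportional to a closed one-form is hypersurface-orthogonal (if $\bm{\varsigma}=\phi\,df$ then $\bm{\varsigma}\wedge d\bm{\varsigma}=\phi\,df\wedge d\phi\wedge df=0$), and the trivial element is hypersurface-orthogonal vacuously. Hence $\Kill_{S_0}$ sits inside the hypersurface-orthogonal Killing vectors of $(S_0,\gamma)$.

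For the vector-space property, first I would record that the set $\F$ of smooth functions $f$ on $S_0$ solving (\ref{master})--(\ref{eq2}) is a real vector space: with $\gamma$, $\bm{s}$ and $h$ regarded as fixed coefficients, each of the three equations is homogeneous and linear in $f$ (every term carries exactly one factor $D f$ or $D D f$, and no undifferentiated $f$), so $\F$ is closed under linear combinations and contains the constants. Alternatively one may invoke Theorem~\ref{th:algebra}: via Proposition~\ref{prop:KillingEqs} the linear assignment $f\mapsto\zeta$ of (\ref{degKill}) identifies $\F$ with the abelian algebra $\Kill_\H^{deg}$ of the relevant horizon, which is a vector space. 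I would also note, from the master equation (\ref{master}), that a non-constant $f\in\F$ has nowhere-vanishing differential.

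The decisive step is to pass to a representation in which the correspondence $f\mapsto\varsigma$ is manifestly linear. Here I would exploit the global choice of $Z$ made in (\ref{s=dz}): once $Z$ is fixed once and for all, every non-trivial $\varsigma\in\Kill_{S_0}$ is represented by some non-constant $f\in\F$ with $\bm{\varsigma}=\frac{Q(f)}{N}\,df=\frac{1}{Z}\,df$, while the trivial element corresponds to constant $f$ (where $\frac{1}{Z}\,df=0$). Conversely, for any $f\in\F$ the one-form $\frac{1}{Z}\,df$ is the metric dual of an element of $\Kill_{S_0}$ --- the trivial one if $f$ is constant and, by the converse direction of Theorem~\ref{th:sexact}, a genuine element of $\Kill_{S_0}$ if $f$ is non-constant. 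Thus $\Phi:\F\to\X(S_0)$, $f\mapsto \varsigma$ with $\bm{\varsigma}=\frac{1}{Z}\,df$, is a surjection onto $\Kill_{S_0}$, and it is linear because $d$ is linear and $Z$ is a single fixed function. Therefore $\Kill_{S_0}=\Phi(\F)$ is the image of a vector space under a linear map, hence a vector space: concretely, for $\varsigma_1,\varsigma_2\in\Kill_{S_0}$ with representatives $f_1,f_2\in\F$ and $\lambda_1,\lambda_2\in\mathbb{R}$ one has $\lambda_1\bm{\varsigma}_1+\lambda_2\bm{\varsigma}_2=\frac{1}{Z}\,d(\lambda_1 f_1+\lambda_2 f_2)$ with $\lambda_1 f_1+\lambda_2 f_2\in\F$.

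I expect the main obstacle to be precisely this last normalization point rather than the bookkeeping of linearity. A priori Lemma~\ref{scaling} only fixes each pair $(f,Q(f))$ up to an independent rescaling $f\mapsto\alpha f+\beta$, so two elements could be written with different normalizing functions $Z_1\ne Z_2$; then the sum $\frac{Q_1}{N_1}df_1+\frac{Q_2}{N_2}df_2$ would not obviously be of the single-function form $\frac{Q(f)}{N}df$ demanded by the definition of $\Kill_{S_0}$. The resolution is that $\bm{s}=\frac{1}{2Z_a}dZ_a$ forces $Z_1$ and $Z_2$ to agree up to a multiplicative constant, so the global choice of $Z$ can be imposed simultaneously on all elements; this is exactly what converts the apparently nonlinear map $f\mapsto\frac{Q(f)}{N}df$ into the linear $f\mapsto\frac{1}{Z}df$ and closes the argument.
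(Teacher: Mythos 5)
Your proposal is correct and takes essentially the same route as the paper: the paper likewise fixes $Z$ once and for all so that every element reads $\bm{\varsigma} = Z^{-1}df$, realizes the sum $f' = a_1 f_1 + a_2 f_2$ as the spacetime degenerate Killing vector $\zeta' = a_1 \zeta_1 + a_2 \zeta_2 \in \Kill^{deg}_{\H_\eta}$ (which is your linearity of the system (\ref{master})--(\ref{eq2}) in its lifted form), and then invokes Theorem \ref{th:sexact} to certify that $Z^{-1}\,df'$ belongs to $\Kill_{S_0}$, with your final paragraph on the normalization of $Z$ matching the paper's mechanism via Lemma \ref{scaling}. One cosmetic slip: the implication you need for nonconstant $f \in \F$ is the \emph{only if} half of Theorem \ref{th:sexact} combined with the converse part of Proposition \ref{prop:KillingEqs}, not the converse direction of the theorem, though since both statements are equivalences this does not affect the argument.
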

\begin{proof}
  That all elements of $\Kill_{S_0}$ are hypersurface orthogonal Killing vectors
  is obvious, so we need to prove that they form a vector space.
  Any two non-trivial elements
$\varsigma_1, \varsigma_2 \in \Kill_{S_0}$ can be written as
$\bm\varsigma_1 = Z^{-1} df_1$,
$\bm\varsigma_2 = Z^{-1} df_2$,  so that, for any $a_1, a_2 \in \mathbb{R}$,
$\bm\varsigma' := a_1 \bm\varsigma_1 + a_2\bm \varsigma_2 = Z^{-1} d (a_1 f_1 + a_2 f_2)$ is a hypersurface-orthogonal Killing vector field of $(S_0,\gamma)$.
We can assume that $f' := a_1 f_1 + a_2 f_2$ is non-constant (otherwise
$\varsigma'$ is the zero vector and belongs to $\Kill_{S_0}$ trivially).
To prove that $\varsigma'$ belongs to $\Kill_{S_0}$ observe that from the function $f'=a_1 f_1 + a_2 f_2$
we construct, using the expression (\ref{degKill}), the vector field
$$
\zeta'=f'\partial_v+\frac{u^2}{2} \Delta f' \partial_u -{\rm grad} f' = a_1 \zeta_1 +a_2 \zeta_2
$$
where $\zeta_a$ are the degenerate Killing vectors of the NHG generated by $\varsigma_a\in \Kill_{S_0}$. Therefore, $\zeta '$ is itself a degenerate Killing vector field of the NHG. The only if part of theorem \ref{th:sexact} implies then that
$Z$ must satisfy
\begin{align*}
\frac{1}{Z}= \frac{Q'(f')}{\gamma^{\sharp}(df',df')},
\end{align*}
which was the only remaining condition for $\varsigma' \in \Kill_{S_0}$.
\end{proof}
\begin{corollary}\label{fsindependent}
For any pair of linearly independent $\varsigma_1, \varsigma_2 \in \Kill_{S_0}$, the corresponding functions $f_1$ and $f_2$ are functionally independent everywhere on $S_0$: $df_1\wedge df_2 \neq 0$.
\end{corollary}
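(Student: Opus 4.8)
The plan is to argue by contradiction, using only the vector space structure of $\Kill_{S_0}$ established in Proposition~\ref{VectorSpace} together with the fact that every non-trivial element of $\Kill_{S_0}$ is represented by a function with nowhere-vanishing differential. First I would fix the function $Z$ once and for all as in (\ref{s=dz}), so that each non-trivial $\varsigma \in \Kill_{S_0}$ is represented by $\bm\varsigma = Z^{-1} df$, with $f$ determined up to an additive constant. The crucial preliminary observation is that, with $Z$ frozen, this representation is linear: if $\varsigma' = a_1 \varsigma_1 + a_2 \varsigma_2$ for constants $a_1,a_2$, then $\bm\varsigma' = Z^{-1}(a_1\, df_1 + a_2\, df_2) = Z^{-1} d(a_1 f_1 + a_2 f_2)$, so the function representing $\varsigma'$ equals $a_1 f_1 + a_2 f_2$ up to a constant, and in particular $df' = a_1\, df_1 + a_2\, df_2$.

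Next I would assume, for contradiction, that $df_1 \wedge df_2$ vanishes at some point $p\in S_0$. Since $\varsigma_1,\varsigma_2$ are non-trivial, both $df_1|_p$ and $df_2|_p$ are nonzero (their differentials vanish nowhere), so their linear dependence at $p$ furnishes constants $(a_1,a_2)\neq(0,0)$ with $a_1\, df_1|_p + a_2\, df_2|_p = 0$. I then set $\varsigma' := a_1 \varsigma_1 + a_2 \varsigma_2$. Because $\varsigma_1,\varsigma_2$ are linearly independent and $(a_1,a_2)\neq(0,0)$, the vector field $\varsigma'$ is not identically zero; by Proposition~\ref{VectorSpace} it is therefore a non-trivial element of $\Kill_{S_0}$.

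The contradiction is now immediate: by the linearity observed above, the function $f'$ representing $\varsigma'$ satisfies $df'|_p = a_1\, df_1|_p + a_2\, df_2|_p = 0$, whence $\bm\varsigma'|_p = Z^{-1} df'|_p = 0$ and $\varsigma'$ vanishes at $p$. But a non-trivial element of $\Kill_{S_0}$ has, by its very definition (equivalently by Theorem~\ref{th:sexact}(i)), a representing function whose differential vanishes nowhere, so $df'|_p\neq 0$. This rules out the assumption and yields $df_1\wedge df_2\neq 0$ everywhere on $S_0$.

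I do not expect any genuinely hard step here. The only point requiring care is the linearity claim of the first paragraph, namely that the function representing the linear combination $\varsigma'$ is (up to a constant) the linear combination $a_1 f_1 + a_2 f_2$; this is precisely where it matters that $Z$ has been fixed globally, freezing the per-generator scaling freedom of Lemma~\ref{scaling}. Once that is in place, the statement follows from combining the vector space property with the nowhere-vanishing differential of non-trivial generators.
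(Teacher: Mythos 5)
Your proof is correct and takes essentially the same route as the paper's: both form the combination $f' = a_1 f_1 + a_2 f_2$, use Proposition~\ref{VectorSpace} to place $\varsigma' = a_1 \varsigma_1 + a_2 \varsigma_2$ in $\Kill_{S_0}$, and invoke the nowhere-vanishing differential of the representing function guaranteed by Theorem~\ref{th:sexact}. Your pointwise contradiction is merely a contrapositive rephrasing of the paper's direct observation that $d(a_1 f_1 + a_2 f_2)$ vanishes nowhere for every $(a_1,a_2)\neq(0,0)$, and your careful remark about freezing $Z$ to make the representation linear is implicitly present in the paper's proof of Proposition~\ref{VectorSpace}.
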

\begin{proof}
From the proposition we know that $f'=a_1 f_1 + a_2 f_2$ gives rise to a degenerate Killing vector of type (\ref{degKill}) for the MKH  $\H_\eta$ of $(M_{\mbox{{\tiny NHG}}},g_{\mbox{{\tiny NHG}}})$. Theorem \ref{th:sexact} then tells us that $df'$ cannot vanish anywhere on $S_0$, and thus $a_1 df_1 + a_2 df_2 \neq 0$ everywhere on $S_0$. In particular, $df_1\wedge df_2 \neq 0$ on the whole $S_0$.
\end{proof}

\section{Structure of near-horizon geometries with MKHs of order $m\geq 3$}
\label{Sect:structure}
The following theorem relates the dimension of $\Kill_{S_0}$ to the
order of the MKH $\Htwo$.
\begin{theorem}
Let $(M_{\mbox{{\tiny NHG}}},g_{\mbox{{\tiny NHG}}})$
be a near horizon geometry with metric (\ref{NHG}). The (necessarily multiple)
Killing horizon $\Htwo$ has order $m \geq 2$ if and only if $\Kill_{S_0}$
has dimension $m-2$.
\end{theorem}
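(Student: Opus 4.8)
The plan is to recast the claimed dimension count as a rank--nullity statement for a natural linear map between $\Kill_{\Htwo}^{deg}$ and $\Kill_{S_0}$. First I would record that, since the Killing horizon of a near horizon geometry is non-fully degenerate with exactly one extra non-degenerate generator $\xi=v\partial_v-u\partial_u$, Theorem~\ref{th:algebra} gives $\dim\Kill_{\Htwo}^{deg}=m-1$ precisely when $\Htwo$ has order $m$. It therefore suffices to prove the single linear-algebra identity $\dim\Kill_{S_0}=\dim\Kill_{\Htwo}^{deg}-1$.

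Next I would set up the map. By Proposition~\ref{prop:KillingEqs} every $\zeta\in\Kill_{\Htwo}^{deg}$ is of the form (\ref{degKill}) for a unique function $f$ on $S_0$, namely its $\partial_v$--coefficient, solving (\ref{master})--(\ref{eq2}); conversely each such $f$ produces an element of $\Kill_{\Htwo}^{deg}$, and this identification $\zeta\leftrightarrow f$ is manifestly linear since $\Delta$ and $\mathrm{grad}$ are. With the one-form $\bm{s}$ and the function $Z$ of (\ref{s=dz}) fixed once and for all, I would define $\Phi(\zeta):=\varsigma$ through $\bm\varsigma=Z^{-1}df$. The only-if part of Theorem~\ref{th:sexact} guarantees that for non-constant $f$ the vector $\varsigma$ is a genuine element of $\Kill_{S_0}$ (the normalization $N/Q=Z$ being exactly what the fixing of $Z$ enforces), while constant $f$ gives $\varsigma=0$; hence $\Phi$ lands in $\Kill_{S_0}$, and it is linear because $\zeta\mapsto f$ and $f\mapsto Z^{-1}df$ both are.

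Then I would compute kernel and image. The kernel consists of those $\zeta$ with $df=0$, i.e.\ $f$ constant, i.e.\ $\zeta\in\langle\eta\rangle$; since $\eta=\partial_v\in\Kill_{\Htwo}^{deg}$ is non-zero, $\ker\Phi=\langle\eta\rangle$ is one-dimensional. Surjectivity is the converse part of Theorem~\ref{th:sexact}: any non-trivial $\varsigma\in\Kill_{S_0}$ is $\bm\varsigma=Z^{-1}df$ for some $f$ with $(Q,P)$ solving (\ref{ODEs}), and feeding this $f$ into (\ref{degKill}) yields $\zeta\in\Kill_{\Htwo}^{deg}$ with $\Phi(\zeta)=\varsigma$. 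Rank--nullity then gives $\dim\Kill_{S_0}=\dim\Kill_{\Htwo}^{deg}-1=(m-1)-1=m-2$. The equivalence runs both ways: order $m$ forces $\dim\Kill_{\Htwo}^{deg}=m-1$ and hence $\dim\Kill_{S_0}=m-2$, while conversely $\dim\Kill_{S_0}=m-2$ gives $\dim\Kill_{\Htwo}^{deg}=m-1$ and, on adjoining $\xi$, $\dim\Kill_{\Htwo}=m$.

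The step I expect to demand the most care is checking that $\Phi$ is well-defined and linear \emph{into} $\Kill_{S_0}$. The delicate point is that the representation $\bm\varsigma=Z^{-1}df$ uses a globally fixed $Z$ rather than the $f$-dependent rescaling freedom of Lemma~\ref{scaling}; one must verify that this single choice of $Z$ is simultaneously compatible with all degenerate generators, so that the vector-space structure of $\Kill_{S_0}$ furnished by Proposition~\ref{VectorSpace} (where sums are formed as $\bm\varsigma'=Z^{-1}d(a_1f_1+a_2f_2)$) exactly matches the linear structure inherited from $\Kill_{\Htwo}^{deg}$ via $\zeta\mapsto f$. Once this compatibility is secured, the rest is the routine rank--nullity bookkeeping sketched above.
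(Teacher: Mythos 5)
Your proposal is correct and follows essentially the same route as the paper: the paper's proof is precisely your rank--nullity argument unrolled into two inequalities, mapping a basis $\{\varsigma_i\}$ of $\Kill_{S_0}$ (written $\bm{\varsigma}_i = Z^{-1} df_i$) together with the constant function $1$ into $p+1$ linearly independent elements of $\Kill_{\Htwo}^{deg}$ via (\ref{degKill}), and conversely --- which is exactly surjectivity of your $\Phi$ combined with $\ker \Phi = \langle \eta \rangle$. The only cosmetic difference is that the paper certifies the linear independence of the resulting generators through Corollary \ref{fsindependent}, whereas your linear-map packaging (legitimate, given the paper's once-and-for-all normalization $N/Q = Z$ fixed after Lemma \ref{scaling}, which is indeed the compatibility point you correctly flag as the delicate step) absorbs that bookkeeping into the kernel computation.
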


\begin{proof}
The proof is a simple combination of previous results. Let $\varsigma_i$, $i=1, \cdots ,p := \mbox{dim} \Kill_{S_0}$
be a basis of $\Kill_{S_0}$ and write $\bm\varsigma_i =
Z^{-1} df_i$. 
From corollary \ref{fsindependent} follows that
the $p+1$ functions $\{ 1, f_1, \cdots f_p\}$ are functionally independent.
For each such function we construct a vector field  in
$(M_{\mbox{{\tiny NHG}}},g_{\mbox{{\tiny NHG}}})$ according to (\ref{degKill}). By
proposition \ref{prop:KillingEqs} and theorem \ref{th:sexact} this yields a collection of
$p+1$ 
linearly independent degenerate Killing generators of $ \Htwo$. This implies that
$p \leq m-2$ where $m$ is the order of this multiple Killing horizon.
The reverse inequality is proved similarly starting with a set of $m-1$
linearly independent degenerate Killing generators of $\Htwo$.
\end{proof}

As the maximum order of a MKH is given by the spacetime dimension $n+1$, we have the following
\begin{corollary}
$\mathrm{dim}(\Kill_{S_0}) \leq n-1$.
\end{corollary}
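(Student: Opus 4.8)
The plan is to obtain this bound as an immediate consequence of the theorem just established together with the upper bound on the order of a multiple Killing horizon recalled at the end of Section~\ref{basics}. That theorem asserts $\mathrm{dim}(\Kill_{S_0}) = m-2$, where $m$ is the order of the multiple Killing horizon $\Htwo$ of the near horizon geometry with metric (\ref{NHG}). Hence the entire content of the corollary reduces to bounding $m$ from above by $n+1$.

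First I would recall that $\Htwo = \H_{\mbox{\tiny NHG}}$ is always non-fully degenerate. Indeed, besides the degenerate generator $\eta=\partial_v$, the metric (\ref{NHG}) admits the Killing vector $\xi = v\partial_v - u\partial_u$ with $[\xi,\eta]=-\eta$, so Theorem~\ref{th:algebra} identifies $\Htwo$ as a non-fully degenerate MKH with $\k=1$. For such a horizon in a spacetime of dimension $n+1$, the maximum possible order is $m=n+1$, exactly as stated at the end of Section~\ref{basics}. Substituting $m\leq n+1$ into $\mathrm{dim}(\Kill_{S_0}) = m-2$ then gives $\mathrm{dim}(\Kill_{S_0}) \leq n-1$, which is the claim.

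There is essentially no obstacle here, as the result is a direct corollary of two facts already in hand. The only point worth a remark is the use of the non-fully-degenerate bound $m\leq n+1$ rather than the fully-degenerate one $m\leq n$: since the MKH of any NHG is non-fully degenerate, $m\leq n+1$ is the pertinent inequality and it already delivers the sharp value $n-1$ (consistently with $\mathrm{dim}(\Kill_{S_0}) = p \leq n-1$ and the fact that $(S_0,\gamma)$ has dimension $n-1$, so $\varsigma$ cannot exceed the ambient dimension either).
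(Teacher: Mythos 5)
Your proof is correct and follows essentially the same route as the paper: the preceding theorem gives $\mathrm{dim}(\Kill_{S_0}) = m-2$, and the bound $m \leq n+1$ on the order of a MKH from Section~\ref{basics} (the non-fully degenerate case, which applies since every NHG horizon is non-fully degenerate) immediately yields $\mathrm{dim}(\Kill_{S_0}) \leq n-1$. Your extra remark justifying why the $m\leq n+1$ bound (rather than $m\leq n$) is the pertinent one is a useful elaboration of what the paper leaves implicit, but it does not change the argument.
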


In view of lemma \ref{ODESol}, expressions (\ref{sh}) for $h$ and $\bm{s}$  are
fully explicit. This allows us to find all possible NHGs with MKHs of order $m\geq 3$ in an explicit way 
locally. We establish the following fundamental result.
\begin{theorem}\label{th:warped}
Let $(M_{\mbox{{\tiny NHG}}},g_{\mbox{{\tiny NHG}}})$ be a near horizon geometry  with metric (\ref{NHG}), and assume that the  Killing horizon $\H_{\eta}$ is multiple with  order $m:=2+p\geq 3$. Let  $\varsigma_i\in \Kill_{S_0}$ ($i,j,k=1,\dots ,p$) be a set of linearly independent Killing vectors on any cut $S_0\subset \Htwo$ which give rise, together with $\eta=\partial_v$ and via Theorem \ref{th:sexact}, to $\Kill_{\Htwo}^{deg}$. For each $i$ let $Z$, $f_i$ and $Q_i(f_i)$ be functions on $S_0$ such that, with $N_i := \gamma^{\sharp}(df_i,df_i)$, $N_i = Q_i Z$ and $\bm\varsigma_i = Z^{-1} df_i$ hold. Then,
$\bm{s} = 1/(2Z) dZ$  and one of the two following mutually exclusive cases holds:
\begin{align*}
   (i) \quad  h =0 \quad \quad & \Longleftrightarrow \quad \forall i \quad
\begin{matrix}
\text{(a)} & Q_i(f_i ) = Q_{0i}
\\
& \mbox{\underline{or}}
\\
\text{(b)} &
Q_i(f_i)  = Q_{0i} ( f_i + c_i)
\end{matrix}
\quad \quad   Q_{0i}\neq 0, c_i \in \mathbb{R}  \\
 (ii) \quad  h = Z \quad \quad & \Longleftrightarrow \quad \forall i \quad
Q_i = Q_0 \left (b_i + (f_i + c_i)^2 \right ), \quad \quad \quad  \quad \quad
\quad   Q_0 \neq 0, c_i, b_i \in \mathbb{R}.
\end{align*}
In either case
$(S_0,\gamma)$ is (locally) a warped product $S_0=V\times \S$ with metric
\be
\gamma = \bar\gamma + \Omega g_{\varepsilon}, \hspace{1cm} \Omega: V \rightarrow \mathbb{R} \label{warped}
\ee
such that $(\S,g_{\varepsilon})$ is a $p$-dimensional maximally symmetric Riemannian manifold of constant curvature $\varepsilon$.
\end{theorem}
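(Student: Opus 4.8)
The plan is to dispose of the algebraic claims first and then construct the warped-product decomposition explicitly from the data of Lemma \ref{ODESol}. The identity $\bm{s} = \frac{1}{2Z}dZ$ is just (\ref{s=dz}). For the dichotomy on $h$ I would use that $h$ is a single function on $S_0$, whereas the second relation in (\ref{sh}) reads $h = \frac12 N_i(\frac{dP_i}{df_i}+P_i^2)$ for \emph{every} generator $\varsigma_i$. Substituting the three possibilities of Lemma \ref{ODESol}: cases (a) and (b) both give $\frac{dP_i}{df_i}+P_i^2 = 0$, hence $h=0$; case (c) gives $\frac{dP_i}{df_i}+P_i^2 = 2/(b_i+(f_i+c_i)^2)$ and, with $N_i = Q_i Z = Q_0^{(i)}(b_i+(f_i+c_i)^2)Z$, yields $h = Q_0^{(i)}Z$. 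Since $h$ and $Z$ are fixed and $Z$ vanishes nowhere, all $Q_0^{(i)}$ must coincide with a common $Q_0$, giving $h=Q_0 Z$ (normalisable to $h=Z$), and $h$ is either identically zero or nowhere zero; this makes (i) and (ii) mutually exclusive and forces the quoted form of every $Q_i$.

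The heart of the matter is to understand the Gram matrix $G_{ij} := \gamma^{\sharp}(df_i,df_j)$. Contracting the master equation (\ref{master}) for $f_i$ with $D^Bf_j$, symmetrising in $i,j$, and using $\bm{s}=\frac{1}{2Z}dZ$ together with $s^A D_A f_i = \frac12 P_i N_i = \frac12 (dQ_i/df_i)Z$ from (\ref{sDf}), I would obtain $D_B(G_{ij}/Z) = \frac12(\frac{dQ_j}{df_j}D_B f_i + \frac{dQ_i}{df_i}D_B f_j)$. The right-hand side lies in the span of $\{df_k\}$, so $g_{ij} := G_{ij}/Z$ is annihilated by $W^{\perp} := \cap_k \ker df_k$ and is therefore a function of $(f_1,\dots,f_p)$ alone. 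Crucially, the integrability of this first-order system for $g_{ij}$ requires $d^2 Q_i/df_i^2$ to be independent of $i$; this holds automatically in cases (a),(b) (where it vanishes) and in case (c) (where it equals $2Q_0$), which is exactly why the common $Q_0$ appeared above.

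Next I would extract the block structure. Applying Frobenius to the rank-$p$ distribution $W := \mathrm{span}\{\varsigma_i\}$, introduce coordinates $(\theta^\mu,f_i)$, $\mu=1,\dots,n-1-p$, with the $\theta^\mu$ constant along every $\varsigma_i$. Since $\bm\varsigma_i \propto df_i$ one has $\gamma^{\sharp}(d\theta^\mu,df_i)=Z\,\varsigma_i(\theta^\mu)=0$, so both $\gamma^{\sharp}$ and $\gamma$ are block diagonal, $\gamma = \gamma_{\mu\nu}d\theta^\mu d\theta^\nu + (G^{-1})_{ij}df_i df_j$. In these coordinates $\varsigma_i = g_{ij}(f)\partial_{f_j}$, whence $[\varsigma_i,\partial_{\theta^\mu}]=0$, and $\mathcal{L}_{\varsigma_i}\gamma=0$ forces $\gamma_{\mu\nu}=\gamma_{\mu\nu}(\theta)$, a genuine base metric $\bar\gamma$. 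For the fibre block I would show $Z$ separates multiplicatively: from $\varsigma_i(\ln Z) = dQ_i/df_i$ and $\varsigma_i = g_{ij}\partial_{f_j}$ one gets $\partial_{f_k}\ln Z = \sum_i (g^{-1})_{ki}\,dQ_i/df_i$, a function of $f$ only, so $\ln Z = \Psi(\theta)+\Phi(f)$ and $Z=e^{\Psi(\theta)}e^{\Phi(f)}$. Then $(G^{-1})_{ij}df_i df_j = e^{-\Psi(\theta)}\big[e^{-\Phi(f)}(g^{-1})_{ij}(f)df_i df_j\big] = \Omega(\theta)\,g_\varepsilon$, giving exactly (\ref{warped}) with $\Omega = e^{-\Psi}$ and fixed fibre metric $g_\varepsilon = e^{-\Phi}(g^{-1})_{ij}df_i df_j$.

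It remains to prove that $(\Sigma,g_\varepsilon)$ has constant curvature. The restrictions of the $\varsigma_i$ are $p$ pointwise linearly independent (by Corollary \ref{fsindependent}) hypersurface-orthogonal Killing fields of the $p$-dimensional $(\Sigma,g_\varepsilon)$, so the isometry action is locally transitive. I expect this last point to be the main obstacle, because local homogeneity alone does not yield constant curvature: one must produce the extra isotropy, i.e. the full $p(p+1)/2$-dimensional isometry algebra. I would do this either by showing that the brackets $[\varsigma_i,\varsigma_j]|_\Sigma$ supply the missing rotational Killing fields and close at dimension $p(p+1)/2$, or, more concretely, by computing the sectional curvature of $g_\varepsilon$ directly from the explicit $g_{ij}(f)$ and $\Phi(f)$ of each case; both routes exploit the hypersurface-orthogonality of the spanning Killing fields, the feature that upgrades homogeneity to a genuine space form. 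The value of $\varepsilon$ is then pinned down as in Theorem \ref{th:signofK}.
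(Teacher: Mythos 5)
Your algebraic half is correct and in places cleaner than the paper's. The dichotomy argument is the same in spirit (substitute Lemma \ref{ODESol} into the second of (\ref{sh})), but your observation that in case (c) each generator separately forces $h=Q_0^{(i)}Z$, so that a single fixed $h$ together with a nowhere-zero $Z$ immediately yields a common $Q_0$, short-circuits the paper's closing argument, which instead obtains $Q_{0i}=Q_{0j}$ by noting that $(Q_{0j}-Q_{0i})f_i\varsigma_j$ would otherwise have to be a Killing vector with $df_i \neq 0$. Your Gram-matrix relation $D_B(G_{ij}/Z)=\tfrac12\bigl(\tfrac{dQ_j}{df_j}D_Bf_i+\tfrac{dQ_i}{df_i}D_Bf_j\bigr)$ is correct (its trace $i=j$ reproduces (\ref{dfdN})), and the ensuing construction of the warped product --- block-diagonality in coordinates $(\theta^\mu,f_i)$, then $\gamma_{\mu\nu}=\gamma_{\mu\nu}(\theta)$ from the Killing equations, then the multiplicative splitting $Z=e^{\Psi(\theta)}e^{\Phi(f)}$ from $\varsigma_i(\ln Z)=dQ_i/df_i$ --- is a genuinely different and more constructive route than the paper's, which proves involution by computing $[\varsigma_i,\varsigma_j]$ explicitly in (\ref{comm}), invokes Fr\"obenius, and obtains the orthogonal decomposition (\ref{warped}) from Schmidt's theorem \cite{Sch} or from hypersurface-orthogonality. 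One gap here: your adapted coordinates presuppose $n-1-p$ independent functions $\theta^\mu$ with $\varsigma_i(\theta^\mu)=0$ for all $i$, which requires $\mathrm{span}\{\varsigma_i\}$ to be in involution; you never verify this, although it follows by a short computation from identities you already have (it is precisely (\ref{comm})).

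The genuine gap is the one you flag yourself: constant curvature of the fibres is asserted, not proved, and of your two proposed rescues the first one fails. By (\ref{comm}) the brackets satisfy $[\varsigma_i,\varsigma_j]=\tfrac12\left(Q_jP_j\,\varsigma_i-Q_iP_i\,\varsigma_j\right)$, i.e.\ they lie in $\mathrm{span}\{\varsigma_k\}$ --- this is exactly the involution property --- so they cannot supply the missing rotational Killing fields. The idea you are missing is the paper's: the \emph{function}-coefficient combinations $\varsigma_{ij}:=f_j\varsigma_i-f_i\varsigma_j$ are again Killing vectors of $(S_0,\gamma)$, because $Z\bm{\varsigma}_i=df_i$ makes the potentially symmetric contribution $Z^{-1}\left(D_Af_j\,D_Bf_i-D_Af_i\,D_Bf_j\right)$ to $D_A(\varsigma_{ij})_B$ antisymmetric, while $\{\varsigma_i,\varsigma_{ij}\ (i<j)\}$ are linearly independent by the functional independence of $\{1,f_i\}$ (Corollary \ref{fsindependent}). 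This produces $p(p+1)/2$ independent Killing fields tangent to the $p$-dimensional leaves, i.e.\ maximal symmetry, hence constant curvature. Your second route --- computing the curvature directly from the explicit $g_{ij}(f)$ and $\Phi(f)$ --- can be made to work (it is essentially what Theorem \ref{th:signofK} carries out via (\ref{Dss})), but in the proposal it remains a declaration of intent; as written, the central geometric conclusion of the theorem is left unestablished.
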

\begin{proof}
From lemma \ref{ODESol} we know that if $h=0$, then all  Killing vectors
$\varsigma_i$ must belong to either case (a) or case (b). This yields the expression in item (i)
of the theorem. Observe that in this item (i), if $\varsigma_k\in \Kill_{S_0}$ and $\varsigma_l\in \Kill_{S_0}$ belong to cases (a) and (b) respectively, then
$$
Z =\frac{N_k}{Q_{0k}} = \frac{N_l}{Q_{0l}(f_l+c_l)}
$$
must hold.

If $h$ is not identically zero, then all
$\varsigma_i$ must belong to case (c) in the same lemma. Thus $Q_i =Q_{0i} (b_i + (f_i + c_i)^2)$ for different in principle $Q_{0i}$.
Moreover $N_i = Z Q_i$ combined with
(\ref{expshc}) shows that $h=Z$ in this case.

To show that $\gamma$ is a warped product metric with
fibers of constant curvature. We start by showing that $\Kill_{S_0}$ is in involution,
i.e. that the commutators $[\varsigma_i, \varsigma_j]$ are linear combinations
 (with functions) of $\{ \varsigma_i \}$. Note first that from
(\ref{sDf})
\begin{align}
D^A f_i D_A Z = 2 Z (D^A f_i) s_A = Z P_i N_i, \label{intermediate1}
\end{align}
and from the master equation (\ref{master})
\begin{align}
D^A f_i D_A D_B f_j = \frac{1}{2Z}  D^A f_i \left (
D_A Z D_B f_j  + D_B Z D_A f_j \right )
=
\frac{1}{2} P_i N_i D_B f_j
+ \frac{D_B Z}{2Z} D^A f_i D_A f_j
\label{intermediate2}
\end{align}
Given that  $\bm\varsigma_i = Z^{-1} d f_i$
we compute
\bea
[ \varsigma_i, \varsigma_j]^B & = &
\frac{1}{Z}D^Af_i D_A\left(\frac{1}{Z} D^Bf_j \right) - (i \longleftrightarrow j)\nonumber \\
& = & \frac{1}{Z^2} D^A f_i \left ( - \frac{1}{Z} D_A Z D^B f_j +
D_A D^B f_j \right )- (i \longleftrightarrow j)\nonumber \\
 &= &\frac{1}{Z^2} \left ( - \frac{1}{2} N_i P_i D^B f_j +
\frac{1}{2} N_j P_j D^B f_i \right )  \nonumber\\
& = & \frac{1}{2} \left ( Q_j P_j \varsigma_i^B -
Q_i P_i \varsigma_j^B\right )\label{comm}
\eea
where in the third equality we inserted (\ref{intermediate1}) and
(\ref{intermediate2}).

We next  recall that, according to corollary \ref{fsindependent},
$\{ \varsigma_i \}$ are not only linearly independent as vector fields, but even more, linearly independent at every point  $q \in S_0$, so that the vector space
$T_q := \mbox{span} \{ \varsigma_i |_q \}$  has dimension $p$.
Thus, the collection $\{T_q\}$
defines a distribution of dimension $p$
which is in involution. By the Fr\"obenius theorem, $S_0$ can be foliated by
injectively immersed integrable manifolds of dimension $p$, whose
tangent space is $T_q$. Since we work locally on $S_0$, we may assume that
these integral manifolds are embedded. We want to show that the induced metric
is of constant curvature. Equivalently, we must show that the integral manifolds are maximally symmetric.  To that aim define
\begin{align*}
\varsigma_{ij} := f_j \varsigma_i - f_i \varsigma_j,
\end{align*}
which are clearly tangent to the integral manifolds. It turns out that
$\varsigma_{ij}$ are Killing vectors of
$(S_0,\gamma)$, as follows from
\bean
D_A (\varsigma_{ij})_B = D_A f_j (\varsigma_i)_B + f_j D_A (\varsigma_i)_B- D_A f_i (\varsigma_j)_B - f_i D_A (\varsigma_j)_B\\
=\frac{1}{Z}\left(D_A f_j D_Bf_i - D_A f_i D_Bf_j \right)+ f_j D_A (\varsigma_i)_B-f_i D_A (\varsigma_j)_B=D_{[A} (\varsigma_{ij})_{B]}.
\eean
Moreover the Killing vectors $\{ \varsigma_i, \varsigma_{ij} (i< j) \}$ are linearly
independent:  take constants $\{ \alpha^i, \beta^{ij} =
\beta^{[ij]}\}$ satisfying
\begin{align*}
0 = \alpha^i \bm\varsigma_i + \beta^{ij} \bm\varsigma_{ij} =
\frac{1}{Z} \left ( \alpha^i + 2 \beta^{ij} f_j \right )  df_i.
\end{align*}
Since $df_i$ are linearly independent at every point it must be
$\alpha^i + 2 \beta^{ij} f_j =0$. But remember that the set $\{ 1, f_i \}$
is functionally independent from where we conclude that $\alpha^i = \beta^{ij} =0$ proving that $\{ \varsigma_i, \varsigma_{ij} (i< j) \}$ is a set of linearly independent Killing vector fields.
Hence, we have
$$p+\frac{p(p-1)}{2} = \frac{p(p+1)}{2}
$$
linearly independent Killing vectors
$\{\varsigma_i,\varsigma_{ij}\}$. Since they are all tangent everywhere to
$p$-dimensional manifolds, these spaces are maximally symmetric
(and therefore, of constant curvature).
Moreover, it must be the case that
$\{\varsigma_i,\varsigma_{ij}\}$ generate a Lie subalgebra of the Killing Lie algebra of $(S_0,\gamma)$. By a theorem due to Schmidt \cite{Sch}, or by noticing that $\{\varsigma_i\}_{i=1,\dots,p}$ generate orthogonal hypersurfaces, the orbits admit orthogonal submanifolds, and thus the metric decomposes as the warped product (\ref{warped}).

It remains to show that, in item (ii), all constants $Q_{0i}$ are equal to each other. This follows from the commutator (\ref{comm}) because in case (c) of lemma \ref{ODESol} one has
$$
[ \varsigma_i, \varsigma_j]=Q_{0j} (f_j+c_j)\varsigma_i - Q_{0i} (f_i+c_i)\varsigma_j
$$
which can be rewritten as
$$
[ \varsigma_i, \varsigma_j]+Q_{0i} c_i\varsigma_j -Q_{0j} c_j\varsigma_i -Q_{0j}\varsigma_{ij} =(Q_{0j}-Q_{0i})f_i \varsigma_j \, .
$$
The lefthand side here is a Killing vector field, and thus the righthand side must be too. But given that $\varsigma_j$ is a Killing vector itself and that $df_i\neq 0$, this can only happen if $Q_{0i}=Q_{0j}:=Q_0$, finishing the proof.
\end{proof}

The constant curvature of $(\Sigma,g_{\varepsilon})$ can be explicitly found.
\begin{theorem}\label{th:signofK}
Under the same hypotheses of theorem \ref{th:warped}, if $p> 1$ the constant sectional curvature $\varepsilon$ of the fibers $(\Sigma,g_{\varepsilon})$ is given by
\begin{enumerate}
\item $\varepsilon = - \G^{kl}q_k q_l $
in item (i) of theorem \ref{th:warped} where
\be
q_i := \left\{
\begin{array}{cc}
0 & \mbox{if $ \varsigma_i$ belongs to case (a)}\\
\frac{1}{2}Q_{0i} & \mbox{if $\varsigma_i$ belongs to case (b)}
\end{array}
\right.
\label{qi}
\ee
\item $\varepsilon =Q_0\left\{Z-Q_0\G^{kl} (f_k+c_k)(f_l+c_l)\right\}$ in item (ii) of theorem \ref{th:warped}
\end{enumerate}
where ${\cal G}^{kl}$ are the components of $g_\varepsilon^\sharp$ in the basis $\{\varsigma_i\}$.
\end{theorem}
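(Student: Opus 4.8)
The plan is to reduce everything to a single fibre of the warped product and to extract $\varepsilon$ from a Weitzenböck (Bochner) identity applied to the Killing vectors $\varsigma_i$, all of which are tangent to the fibres.

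First I would pass to the intrinsic geometry of a fibre. Since $\bm\varsigma_i=Z^{-1}df_i$ is tangent to the fibre, $\mathrm{grad}\,f_i$ has no normal component, so the intrinsic Hessian of $f_i$ equals the restriction of $D_AD_Bf_i$; consequently the master equation (\ref{master}) and $\bm s=dZ/(2Z)$ hold verbatim for the intrinsic Levi-Civita connection $\n$ of the induced fibre metric $\bar\gamma:=\gamma|_{\S}$. In particular each $\varsigma_i$ is a Killing field of $(\S,\bar\gamma)$ with $\n_B(\varsigma_i)_A=\tfrac{1}{2Z^2}\big(\n_AZ\,\n_Bf_i-\n_BZ\,\n_Af_i\big)$. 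I would also record the scalars $q_i:=\tfrac12Q_iP_i$, which by lemma \ref{ODESol} equal $0$, $\tfrac12 Q_{0i}$ and $Q_0(f_i+c_i)$ in cases (a), (b), (c) respectively, together with the identities $\n^Af_i\,\n_AZ=2q_iZ^2$ and $\n^A\n_Af_i=2q_iZ$ coming from (\ref{intermediate1}) and remark \ref{remarka}.

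The core step is the Weitzenböck identity $\n^B\n_B(\varsigma_i)_A=-\mathrm{Ric}_{AB}\,(\varsigma_i)^B$, valid for every Killing field. On the left I would substitute the expression for $\n\varsigma_i$ above and reduce, using the intrinsic master equation, to the two scalars $\n^AZ\,\n_AZ$ and the fibre trace $\n^A\n_AZ$. On the right, theorem \ref{th:warped} guarantees that the fibre is maximally symmetric, so $\mathrm{Ric}_{AB}=(p-1)\,\varepsilon_\ast\,\bar\gamma_{AB}$ with $\varepsilon_\ast$ the sectional curvature of $\bar\gamma$. Because $\{\n_Af_i\}$ is a pointwise basis of the fibre cotangent space, expanding $\n_AZ=\sum_i\lambda_i\,\n_Af_i$ and contracting with $\n^Af_j$ gives $\lambda_i$ explicitly and shows that $\n^AZ\,\n_AZ$ is controlled by the quadratic form $\G^{kl}q_kq_l$, where $\G^{kl}$ are the components of $g_\varepsilon^\sharp$ in the basis $\{\varsigma_i\}$; a parallel (longer) computation of $\n^A\n_AZ$ then leaves $\varepsilon_\ast$ expressed through $\G^{kl}q_kq_l$ alone when the $q_i$ are constant, and with an extra term fed by $\n_Aq_i=Q_0\n_Af_i$ when they are not.

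Substituting the explicit $q_i$ separates the two items. In item (i) (cases (a)/(b)) the $q_i$ are constant, the linear term is absent, and one is left with $\varepsilon=-\G^{kl}q_kq_l$ with $q_i$ as in (\ref{qi}). In item (ii) (case (c)), $q_i=Q_0(f_i+c_i)$ is non-constant; its gradient produces the additional contribution and yields $\varepsilon=Q_0\{Z-Q_0\,\G^{kl}(f_k+c_k)(f_l+c_l)\}$. I expect the main obstacle to be precisely this reduction: computing $\n^A\n_AZ$ cleanly (in particular isolating the extra term generated by the non-constant $q_i$ in case (ii)) and tracking the normalization relating the induced fibre metric $\bar\gamma$ to $g_\varepsilon$ in $\gamma=\bar\gamma+\Omega g_\varepsilon$, so that the coefficients in the final answer are genuinely the components $\G^{kl}$ of $g_\varepsilon^\sharp$. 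As a qualitative check I would use the commutators (\ref{comm}), rewritten as $[\varsigma_i,\varsigma_j]=q_j\varsigma_i-q_i\varsigma_j$: in item (i) they close on the $\{\varsigma_k\}$ and force $\varepsilon\le 0$ (flat exactly when all $q_i=0$), whereas only in case (c) does a $\varsigma_{ij}$-term appear, consistent with $\varepsilon$ of either sign in item (ii).
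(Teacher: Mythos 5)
Your proposal is correct, and it takes a genuinely different route from the paper. The paper's proof is a frame computation in the style of homogeneous-space theory: following Schmidt \cite{Sch1}, it derives the connection coefficients $2\n_{\varsigma_i}\varsigma_j=\G_{ij}Q_lP_l\G^{lk}\varsigma_k-Q_iP_i\varsigma_j$ from the commutators (\ref{comm}) and the Killing property, computes $\varsigma_i(\G_{jk})$ and (in case (c)) $\varsigma_i(f_j)=Q_0\{(f_i+c_i)(f_j+c_j)+c_{ij}\}$, and then assembles the full curvature operator $\n_{\varsigma_i}\n_{\varsigma_j}\varsigma_k-\n_{\varsigma_j}\n_{\varsigma_i}\varsigma_k-\n_{[\varsigma_i,\varsigma_j]}\varsigma_k$, reading off $\varepsilon$ as the coefficient of $\G_{jk}\varsigma_i-\G_{ik}\varsigma_j$ and checking its constancy along the orbits via (\ref{DZ}) and (\ref{sfj}). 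You instead invoke the Bochner identity $\n^B\n_B(\varsigma_i)_A=-\mathrm{Ric}_{AB}\,\varsigma_i^B$ together with $\mathrm{Ric}=(p-1)\varepsilon_*\bar\gamma$, which is legitimate precisely because theorem \ref{th:warped} (part of the hypotheses) already guarantees maximally symmetric fibers; this explains why your route needs $p>1$, exactly the stated hypothesis, whereas the paper's computation delivers the full Riemann tensor and so re-derives constant curvature rather than consuming it. I checked that your reduction closes: the tangential Hessian of $Z$ is indeed determined by differentiating $\n^Af_i\,\n_AZ=2q_iZ^2$ and using the intrinsic master equation (so slightly more than the two scalars $\n^AZ\n_AZ$ and $\n^A\n_AZ$ enters, but all of it is forced); with $q_i=\tfrac12 Q_iP_i$ one gets $\n^B\n_B(\varsigma_i)_A=(p-1)\,\G^{kl}q_kq_l\,(\varsigma_i)_A$ when the $q_i$ are constant, and the extra term $\n_Aq_i=Q_0Z(\varsigma_i)_A$ in case (c) feeds through the divergence and the trace of the Hessian of $Z$ to produce exactly $\varepsilon=Q_0\{Z-Q_0\G^{kl}(f_k+c_k)(f_l+c_l)\}$. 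The normalization issue you flag as the main obstacle dissolves on inspection of the paper: there $\G_{ij}$ is effectively the induced fiber metric, since the proof uses $\G_{ij}=Z^{-1}\varsigma_i(f_j)=\gamma(\varsigma_i,\varsigma_j)$, so the warping factor is absorbed fiberwise and your computation with the induced metric lands on the stated formulas verbatim. Your closing consistency check via $[\varsigma_i,\varsigma_j]=q_j\varsigma_i-q_i\varsigma_j$ agrees with (\ref{comm}) and correctly predicts $\varepsilon\leq 0$ in item (i).
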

\begin{proof}
The curvature tensor of the orbits of a group of motions can be computed in terms of its Lie algebra structure constants and the metric, as proven in \cite{Sch1}, see also \cite{Exact} section 8.6. We apply a slightly simpler modification of that calculation to our situation. First of all, note that the set of hypersurface-orthogonal Killing vectors $\{\varsigma_i\}$ defines a frame, that is, a basis of $\mathfrak{X}(S_0)$. Then, we wish to compute the curvature in this basis. The components of the metric in this basis will be denoted by
$$
\G_{ij}:= g_\varepsilon (\varsigma_i,\varsigma_j), \hspace{1cm} \G^{ki}\G_{ij}=\delta^k_j .
$$
We start by computing the Levi-Civita connection $\n$ of $(\Sigma,g_\varepsilon)$ in this basis, that is, $\n_{\varsigma_i} \varsigma_j$.
 Its anti-symmetric part in $ij$ is given by the commutator (\ref{comm}). For the symmetric part, we use a formula derived in \cite{Sch1,Exact} for arbitrary Killing vectors
$$
g_\varepsilon \left(\varsigma_l, \n_{\varsigma_i} \varsigma_j+\n_{\varsigma_j} \varsigma_i\right)=
g_\varepsilon \left(\left[ \varsigma_i,\varsigma_l\right],\varsigma_j\right)+g_\varepsilon \left(\left[ \varsigma_j,\varsigma_l\right],\varsigma_i\right)=Q_l P_l \G_{ij} -\frac{1}{2} Q_i P_i \G_{lj} -\frac{1}{2} Q_j P_j \G_{li}
$$
where in the last equality we have used (\ref{comm}). Adding the symmetric and antisymmetric parts we deduce (summation on $l$ and $k$ is understood)
\be
2 \n_{\varsigma_i} \varsigma_j = \G_{ij} Q_l P_l \G^{lk}\varsigma_k - Q_i P_i \varsigma_j . \label{Dss}
\ee
To compute the second derivative $\n_{\varsigma_k}\n_{\varsigma_i} \varsigma_j$ using this formula one needs to know the derivatives of $\G_{ij}$ and $\G^{ij}$ along $\varsigma_k$. But these are easily found on using again (\ref{comm}) as
\bean
\varsigma_i (\G_{jk}) &=& \varsigma_i (g_\varepsilon (\varsigma_j,\varsigma_k))= g_\varepsilon (\pounds_{\varsigma_i }\varsigma_j, \varsigma_k) +g_\varepsilon (\varsigma_j ,\pounds_{\varsigma_i }\varsigma_k)= \frac{1}{2} Q_j P_j \G_{ik} +\frac{1}{2} Q_k P_k \G_{ij}-  Q_i P_i\G_{jk}, \\
\varsigma_i (\G^{lm}) &=& - \G^{lk} \G^{mj} \varsigma_i (\G_{jk}).
\eean
The only remaining derivatives to be known are those of $Q_j P_j$, but for these it is convenient to separate the cases and consider their explicit form.
\begin{itemize}
\item For item (i) in the theorem, cases (a) and (b), we have $Q_i P_i =2 q_i $ are constants according to definition (\ref{qi}), and thus $\varsigma_j (Q_i P_i ) = 2 \varsigma_j (q_i) =0$. Putting everything together and after a little calculation using (\ref{Dss}) one can then easily obtain
$$
\n_{\varsigma_i} \n_{\varsigma_j} \varsigma_k - \n_{\varsigma_j} \n_{\varsigma_i} \varsigma_k-
\n_{[\varsigma_i,\varsigma_j]} \varsigma_k = -(\G^{mn}q_m q_n ) \left(\G_{jk} \varsigma_i -\G_{ik} \varsigma_j\right)
$$
which proves 1. (It is easily seen with the used formulas that $\varsigma_i(\G^{mn}q_m q_n)=0$ so that $\G^{mn}q_m q_n$ is actually constant, as it must).
\item For item (ii) in the theorem, that is for case (c), we have $Q_j P_j = 2Q_0(f_j+c_j)$ so that we need $\varsigma_i(f_j)$. This can be derived from the following calculation
\bean
&&d(\varsigma_i (f_j)) =\pounds_{\varsigma_i} (df_j) = \pounds_{\varsigma_i} (Z\bm\varsigma_j)=\varsigma_i (Z) \bm\varsigma_j +Z\pounds_{\varsigma_i}\bm\varsigma_j=\\
&&2Q_0(f_i+c_i)Z\bm\varsigma_j +Z Q_0\left\{(f_j+c_j) \bm\varsigma_i -(f_i+c_i)\bm\varsigma_j\right\}=
Q_0  \left\{(f_i+c_i)df_j + (f_j+c_j)df_i \right\}
\eean
where (\ref{comm}) and
\be
\varsigma_i(Z)=2Q_0 Z (f_i+c_i) \label{DZ}
\ee
which comes from (\ref{intermediate1}) have been used. This gives immediately
\be
\varsigma_i(f_j) =Q_0 \left\{(f_i+c_i)(f_j+c_j)+c_{ij} \right\}, \hspace{1cm} c_{ij}\in \mathbb{R} \label{sfj}
\ee
where the constants $c_{ij}$ are such that $c_{ii} =b_i$. Observe that
$$
\G_{ij} =g_\varepsilon (\varsigma_i, \varsigma_j) =Z^{-1} \varsigma_i(f_j) =Z^{-1}Q_0 \left\{(f_i+c_i)(f_j+c_j)+c_{ij} \right\},
$$
which is particular implies  the symmetry $c_{ij} = c_{ji}$.
Using this and putting everything together, a little longer calculation leads easily to
$$
\n_{\varsigma_i} \n_{\varsigma_j} \varsigma_k - \n_{\varsigma_j} \n_{\varsigma_i} \varsigma_k-
\n_{[\varsigma_i,\varsigma_j]} \varsigma_k =
Q_0\left\{Z-Q_0\G^{kl} (f_k+c_k)(f_l+c_l)\right\}\left(\G_{jk} \varsigma_i -\G_{ik} \varsigma_j\right).
$$
It is a matter of checking that the quantity $Q_0\left\{Z-Q_0\G^{kl} (f_k+c_k)(f_l+c_l)\right\}$ has zero derivative along all $\varsigma_i$ on using (\ref{DZ}) and (\ref{sfj}). This finishes the proof.
\end{itemize}
\end{proof}

For completeness, we now check that all Killing vectors of the form (\ref{degKill}) related to elements $\varsigma \in \Kill_{S_0}$ according to theorem \ref{th:sexact}
necessarily commute with each other and with $\partial_v$,
in agreement with theorem \ref{th:algebra}. To that end, take any two such Killing vectors
$$
\zeta_a = f_a\partial_v +\frac{u^2}{2} \Delta f_a \partial_u - u\,  {\rm grad} f_a
$$
with $a\in\{1,2\}$. Using that $Z \varsigma_a ={\rm grad} f_a$ for $Z$ as defined above, so that $\varsigma_1(f_2)=\varsigma_2(f_1)$ and $\Delta f_a =\varsigma_a (Z)$, a direct computation provides for their commutator
$$
\left[\zeta_1,\zeta_2 \right]= \frac{u^3}{2} Z \left\{\varsigma_2 (\Delta f_1) - \varsigma_1(\Delta f_2)\right\}\partial_u+u^2 Z \left\{Z\left[\varsigma_1,\varsigma_2 \right] +\frac{1}{2} \varsigma_1(Z)\varsigma_2 -\frac{1}{2}\varsigma_2(Z)\varsigma_1\right\}.
$$
The term proportional to $u^2$ vanishes due to (\ref{intermediate1}) and (\ref{comm}). Concerning the first term, proportional to $u^3$, we note that $\varsigma_a$ are Killing vectors of $(S_0,\gamma)$, and thus they commute with the Laplacian $\Delta$, which easily leads to $\left[\zeta_1,\zeta_2 \right]=0$, as expected.

\section{Uniqueness of NHG for MKHs of order $m\geq 3$}
\label{Uniqueness}
Once we know the precise explicit form of the possible NHGs with MKHs of order $m\geq3$ we are ready to address the problem whether or not there can be several distinct NHGs arising from a MKH $\H_{\mathrm{MKH}}$  which admits at least two independent degenerate Killing vectors.
More precisely, let $\eta,\zeta\in \Kill_{\H_{\mathrm{MKH}}}^{deg}$ denote  degenerate Killing vectors. We want to analyze whether there is a (local) isometry between the
NHGs of $\eta$ and $\zeta$ associated to (each connected component of)
the horizon $\H_{\eta,\zeta}:=\H_{\eta}\cap\H_{\zeta}\subset \H_{\mathrm{MKH}}$.

Let $(M_{\mbox{{\tiny NHG}}},g_{\mbox{{\tiny NHG}}})$ be a near horizon geometry with metric (\ref{NHG}), and assume that the Killing horizon $\Htwo =\{u=0\}$ is multiple with order $m\geq 3$. Then its Lie algebra $\Kill_{\Htwo}$ includes at least the following Killing vector fields
$$
\xi =v\partial_v -u\partial_u, \quad \eta =\partial_v, \quad \zeta = f\partial_v +\frac{u^2}{2} \Delta f \partial_u -u \, \mbox{grad} f
$$
where the last one satisfies the relations proven in theorem \ref{th:sexact}, so that in particular $f$ is a solution of (\ref{master}). The NHG of this $(M_{\mbox{{\tiny NHG}}},g_{\mbox{{\tiny NHG}}})$ with respect to $\eta$ is obviously itself, as follows from the intrinsic characterizations of $h$, $\bm{s}$ and $\gamma$ given in Remark \ref{intrinsic}. However, one can also construct another NHG for $(M_{\mbox{{\tiny NHG}}},g_{\mbox{{\tiny NHG}}})$ by using as degenerate Killing $\zeta$
and by restricting the horizon to $\H_{\eta,\zeta}$. It is a matter of checking that, for this Killing $\zeta$, the corresponding intrinsic elements of Remark \ref{intrinsic} characterizing its NHG (call them $\bm{S}, H$ and $\tilde \gamma$) are such that $\tilde\gamma$ is isometric to $\gamma$ (and thus we will remove the tilde when not using a coordinate system) and the others are given by
\be
\left . \bm{S} \right |_{S_0} =
\left . \bm{s} - \frac{df}{f}  \right |_{S_0}, \quad  \quad
\left .  H \right  |_{S_0}
= \left . h- \frac{\Delta f}{f} +\frac{N}{f^2} \right |_{S_0} .
\label{SH}
\ee
Note that $f$ is non-zero on $\H_{\eta,\zeta}$.

By the standard near-horizon construction, there exist
coordinates $\{U,V,y^A\}$ such that
\be
g_{\mbox{{\tiny NHG}}}(\zeta)
= 2dV(dU +2U S_A dy^A +\frac{1}{2} U^2 H dV) +\tilde\gamma_{AB} dy^A dy^B. \label{NHGy}
\ee
and $H$,  $\bm{S}$ and $\tilde{\gamma}$ have been extended off $S_0$ as functions independent of $U$ and $V$.

This information is enough to prove the preliminary result that
$g_{\mbox{{\tiny NHG}}}(\zeta)$ admits a degenerate Killing generator for which
the corresponding near horizon geometry brings us back to the original metric.
\begin{proposition}\label{prop:noiter}
The MKH $\H := \{U=0\}$ of the metric $g_{\mbox{{\tiny NHG}}}(\zeta)$ has at least three  Killing vectors given by
$$
\tilde\xi =V\partial_V -U\partial_U, \quad  \tilde\eta =\partial_V,
\quad \tilde\zeta = \frac{1}{\f}\partial_V +\frac{U^2}{2} \Delta
\left (\frac{1}{\f} \right )\partial_U -U \mbox{{\rm grad}} \frac{1}{\f},
$$
where  $\tilde\eta$ and $\tilde\zeta$ are degenerate
and the function $F$ is independent of $U$ and $V$ and satisfies $\f |_{S_0} = f$.
Moreover, the NHG of
$\H_{\tilde \eta,\tilde\zeta}\subset \H$ with respect to $\tilde\zeta$ is the original metric $g_{\mbox{{\tiny NHG}}}$ in (\ref{NHG}).
\end{proposition}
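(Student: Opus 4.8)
The plan is to verify the three claimed Killing vectors directly and then apply the machinery already developed to identify the near horizon limit. First I would observe that $\tilde\xi$ and $\tilde\eta$ are the canonical Killing vectors that every metric of the form (\ref{NHGy}) possesses, exactly as recorded just after (\ref{NHG}): $\tilde\eta=\partial_V$ is degenerate with $\H=\{U=0\}$ as its degenerate Killing horizon, and $\tilde\xi=V\partial_V-U\partial_U$ satisfies $[\tilde\xi,\tilde\eta]=-\tilde\eta$, making $\H$ a non-fully degenerate MKH of order at least two. The substance of the proposition lies in the third vector $\tilde\zeta$. The natural route is to invoke Proposition~\ref{prop:KillingEqs} applied to the metric (\ref{NHGy}), whose near horizon data are $(\tilde\gamma,\bm{S},H)$: the proposition guarantees that $\tilde\zeta$ of the form stated is a degenerate Killing generator of $\H=\{U=0\}$ \emph{provided} the function $G:=1/\f$, restricted to the cut, solves the master equation (\ref{master}) together with (\ref{eq1})--(\ref{eq2}), now written with $\bm{S}$ and $H$ in place of $\bm{s}$ and $h$.

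Thus the core computation is to check that $G|_{S_0}=1/f$ satisfies these three equations with respect to the shifted data $(\bm{S},H)$ of (\ref{SH}). The cleanest way I would organize this is via Theorem~\ref{th:sexact}: rather than verifying (\ref{master})--(\ref{eq2}) by brute force, I would show that $G=1/f$ gives rise, through (\ref{killing}), to exactly the same hypersurface-orthogonal Killing vector $\varsigma$ of $(S_0,\gamma)$ as $f$ does, up to the constant rescalings permitted by Lemma~\ref{scaling}. Concretely, since $\bm\varsigma=Z^{-1}df$ and $dG=-f^{-2}df$, one has $dG=-f^{-2}Z\,\bm\varsigma$, so $G$ produces a Killing vector proportional to $\varsigma$ but with a new norm $\tilde N:=\tilde\gamma^\sharp(dG,dG)=f^{-4}N$ and a correspondingly transformed pair $(\tilde Q,\tilde P)$ through the ODE system (\ref{ODEs}). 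The task is then to confirm that the transformed $(\tilde Q,\tilde P)$ reproduce, via (\ref{sh}), precisely the shifted one-form $\bm{S}=\bm{s}-df/f$ and scalar $H=h-\Delta f/f+N/f^2$ of (\ref{SH}). This is a matching of the explicit cases of Lemma~\ref{ODESol} under the substitution $f\mapsto 1/f$, and I expect it to go through case by case (the map $1/f$ sends case (c) to case (c), with the constants $b,c$ transformed appropriately, and interchanges/preserves cases (a),(b) depending on $c$).

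Having established that $\tilde\zeta$ is a genuine degenerate Killing generator of $\H$, the final claim---that the NHG of $g_{\mbox{{\tiny NHG}}}(\zeta)$ with respect to $\tilde\zeta$ returns the original metric (\ref{NHG})---follows from the intrinsic characterization of Remark~\ref{intrinsic}. I would compute the torsion one-form and the scalar of the NHG built from $\tilde\zeta$ using the analogue of (\ref{SH}), but now applied with base data $(\bm{S},H)$ and generating function $G=1/f$, so that the output is $\bm{S}-dG/G$ and $H-\Delta_{\tilde\gamma}G/G+\tilde N/G^2$. Substituting $G=1/f$, one finds $dG/G=-df/f$, hence $\bm{S}-dG/G=\bm{s}-df/f+df/f=\bm{s}$, and a parallel (slightly longer) simplification of the scalar recovers $h$; the first fundamental form is unchanged because $\tilde\gamma$ is isometric to $\gamma$. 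The main obstacle I anticipate is purely computational bookkeeping in the scalar: one must carefully track how $\Delta_{\tilde\gamma}$ and $\tilde N=f^{-4}N$ behave under the composition of the two shifts, using the identities collected in Remark~\ref{remarka} (in particular $\Delta f=2s^A D_A f=PN$ and the relation between $\Delta(1/f)$ and $\Delta f$) to collapse the expression back to $h$. Once these substitutions are carried out, the intrinsic data of the twice-taken NHG coincide exactly with $(\gamma,\bm{s},h)$, and the identification of the two near horizon geometries is complete.
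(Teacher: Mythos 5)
Your proposal is correct, and its core verification travels a genuinely different road from the paper's. The paper applies Proposition~\ref{prop:KillingEqs} to the metric (\ref{NHGy}) and checks by \emph{direct computation} that $1/f$ solves the shifted system: the master equation with $\bm{S}=\bm{s}-df/f$ is verified explicitly in two lines (the factor $-1/f^2$ pulls out cleanly), the analogues of (\ref{eq1})--(\ref{eq2}) with $H$ are stated to follow from ``similar, but a little longer calculations,'' and the final claim is then dispatched by noting that repeating the NHG process returns (\ref{NHG}). You instead route through the converse of Theorem~\ref{th:sexact} and Lemma~\ref{ODESol}: with $G=1/f$ one has $dG=-f^{-2}df$, so the candidate one-form $(\tilde Q/\tilde N)\,dG$ is a constant multiple of $\bm\varsigma$ (hence automatically a hypersurface-orthogonal Killing one-form of the common cut metric), with $\tilde N=N/f^4$, $\tilde Q(G)=\mu\, G^2 Q(1/G)$ and $\tilde P(G)=2f-f^2P(f)$; I confirm that (\ref{sh}) built from $(\tilde N,\tilde P,G)$ then reproduces (\ref{SH}) exactly, e.g.\ $\tfrac{1}{2}\tilde N\bigl(d\tilde P/dG+\tilde P^2\bigr)=h-\Delta f/f+N/f^2=H$ using $\Delta f=PN$. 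The paper's route buys brevity and self-containedness; yours buys structural insight, since it exhibits \emph{why} the proposition holds (the shifted data $(\bm{S},H)$ are again of the classified form, with explicitly transformed constants) and reduces the longer (\ref{eq1})--(\ref{eq2}) checks to one-variable identities already established. One inaccuracy in your case bookkeeping, harmless but worth fixing: the map $f\mapsto 1/f$ does not ``interchange/preserve cases (a),(b).'' Since $\tilde Q(G)=\mu\,G^2Q(1/G)$ is a polynomial of degree at most two in $G$, case (a) always lands in case (c) with $b=c=0$ (consistently, there $H=N/f^2\neq 0$, as visible in (\ref{Smeta})), case (b) with $c\neq 0$ also lands in case (c), while case (c) with $b+c^2=0$ drops to case (b); because Lemma~\ref{ODESol} classifies precisely all such polynomials paired with $P=Q^{-1}dQ/df$, every $\tilde Q$ you produce is automatically a solution of (\ref{ODEs}), so the slip does not affect validity. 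Finally, your handling of the ``moreover'' clause --- applying the (\ref{SH})-shift a second time with $G=1/f$ and watching $\bm{S}-dG/G=\bm{s}$ and $H-\Delta G/G+\tilde N/G^2=h$, then invoking Remark~\ref{intrinsic} --- is exactly the argument the paper leaves implicit, and the scalar does collapse as you anticipate.
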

\begin{proof}
  From theorem \ref{th:sexact} we know that the Killing vectors of $g_{\mbox{{\tiny NHG}}}(\zeta)$ which are degenerate at $\H$ other than $\partial_V$ must solve equations (\ref{master}-\ref{eq2}) where now $h$ and $\bm{s}$ must be substituted by their corresponding capital-letter versions. But given that $f$ solves these equations (\ref{master}-\ref{eq2}) in their original form, it is easy to see that $1/f$ solves the new equations. It suffices to work on $S_0$, where
$F= f$
\bean
&&D_A D_B (1/f) -S_A D_B(1/f)  -S_B D_A (1/f) \\
&&= -\frac{1}{f^2} D_A D_B f +\frac{2}{f^3} D_A f D_B f  -\left(s_A -\frac{1}{f} D_Af\right) \left(-\frac{D_Bf}{f^2}\right)
-\left(s_B -\frac{1}{f} D_Bf\right) \left(-\frac{D_Af}{f^2}\right) \\
&& = -\frac{1}{f^2}\left(D_A D_B f -s_A D_B f-s_B D_A f \right) =0.
\eean
Similar, but a little longer calculations, using (\ref{master}-\ref{eq2}) prove that
\bean
&& D^A (1/f) \left( D_A H - 2 S_A H\right)=0, \\
&& H D_A (1/f)  = 2D^B (1/f) \left(D_B S_A -D_A S_B \right)+D_A (S^B D_B (1/f)) - 2 S_A S^B D_B (1/f).
\eean
In simpler words, $1/f$ is a solution of the corresponding equations, leading to the last Killing in the list given in the Proposition. But this easily implies that repeating the NHG process just leads to the original metric given in (\ref{NHG}).
\end{proof}

We want to analyze whether
(\ref{NHG})  and (\ref{NHGy}) are isometric to each other or not.  Our strategy
will be to assume that they are (locally) isometric and find a set of necessary conditions that need to be satisfied, in particular regarding the explicit form
of the coordinates
$\{U,V,y^A\}$ in terms of $\{u,v,x^A\}$. We will then confirm
that this coordinate change indeed transforms one metric into the other.

Assume thus that the two spacetimes are locally isometric and that the isometry takes the cross section $\{u=v=0\}$ into the cross section
$\{ U =V =0\}$. The only Killing vector of
$g_{\mbox{{\tiny NHG}}}$ which is (i) a Killing generator
of $\Htwo$, (ii) vanishes on $S_0$  and (iii) has surface gravity $\kappa =1$
is $\xi$. Similarly, the only Killing vector of
$g_{\mbox{{\tiny NHG}}}(\zeta)$  satisfying the  corresponding properties is $\tilde{\xi}$. Thus, we are forced to identify $\xi$ with $\tilde{\xi}$.

The vector field
$\partial_V$ in  $g_{\mbox{{\tiny NHG}}}(\zeta)$ is by construction the Killing vector with respect to which we have performed the near horizon limit. Since in the
original coordinates this vector is $\zeta$, we are led to identify $\zeta$ with
$\tilde{\eta}$. Concerning $\eta$, this vector is a degenerate Killing generator
of $\Htwo$. By the assumed isometry, it must also be a degenerate Killing
generator of
$\H_{\tilde\eta,\tilde\zeta}\subset\H = \{ U=0\}$. Let $m$ be the order of this multiple Killing horizon (we know that $m \geq 3$ by Proposition \ref{prop:noiter})
 and $\{ \tilde{\eta}
, \tilde{\zeta}, \tilde{\zeta}_a \}$ ($a=2, \cdots m-1$) be a basis of degenerate
Killing  generators of $\H_{\tilde\eta,\tilde\zeta}$.
The assumed isometry forces the existence of constants $\alpha, \beta, \beta^a$ such that
\begin{align}
\eta =
\alpha \tilde{\eta}
+ \beta \tilde{\zeta} + \beta^a \tilde{\zeta_a}.
\label{decom}
\end{align}
We now use the proportionality  $\zeta |_{S_0} = f \eta |_{S_0}$ and let
$\tilde{f}_a$ be the functions on $S_0$ defined by $\tilde{\zeta}_a |_{S_0}
= \tilde{f}_a \partial_V$.  Thus,
\begin{align*}
\tilde\eta |_{S_0} =\partial_V |_{S_0} = \zeta |_{S_0} = f \eta |_{S_0}
= f \left ( \alpha \tilde{\eta}
+ \beta \tilde{\zeta} + \beta^a \tilde{\zeta_a} \right ) |_{S_0} =
f \left ( \alpha + \frac{\beta}{f} + \beta^a \tilde{f}_a \right )
\partial_V  |_{S_0} \, .
\end{align*}
Since $\{1, 1/f, \tilde{f}_a \}$ are functionally independent solutions of
the master equation (\ref{master}) in the metric
$g_{\mbox{{\tiny NHG}}}(\zeta) $, we conclude that the only possibility is
$\alpha = 0, \beta =1$ and $\beta^a=0$ and we are forced
to identify $\eta$ and $\tilde{\zeta}$,
  and thus  $\H_{\eta,\zeta}$ and $\H_{\tilde \eta,\tilde\zeta}$.

We will have to deal with functions that agree on $S_0$ but are
extended in two different ways off $S_0$, namely as functions
independent of $\{U,V\}$ and as functions independent of $\{u,v\}$.
A more geometric way to state this is that the set of points at equal value of $x^A$
are different from the set of points  at equal value of $y^A$, even if these
coordinates agree on $S_0$. The corresponding functions obtained by the two
extensions are different and hence
must be given different names.
An example is the pair of functions $f$ and $F$, which agree on $S_0$ but have been
extended so that $f$ is independent of the coordinates $\{u,v\}$ while
$F$ is independent of the coordinates $\{U,V\}$. Note that this has nothing to
do with the fact that either $F$ or $f$ can still be expressed in any
coordinate system one wishes. The function $h$ is independent of $\{u,v\}$.
The function that agrees with $h$ on $S_0$
  but is extended as a function independent of $\{U,V\}$ will be denoted by
  $\hat{h}$. Similarly, we define $\hat{H}$ as the function
  that agrees with $H$ on $S_0$ and is extended as independent of $\{u,v\}$.
  From the second in (\ref{SH}) one has
    \begin{align*}
      \hat{H} = h - \frac{\Delta f}{f} + \frac{N}{f^2}.
    \end{align*}
Our first step in the process of determining the coordinate change is to
 impose that the scalar products of various Killing fields must agree when computed with respect to each metric. Specifically it must be that
    $\gNH (\zeta) (\tilde{\eta}, \tilde{\eta}) =
\gNH (\zeta, \zeta)$, which after a simple calculation that uses
(\ref{Deltaf}) provides
\begin{align}
  U^2 H =  u^2 f^2 \left ( h - \frac{\Delta f}{f} + \frac{N}{f^2} \right )
  =  u^2 f^2 \hat{H} \label{nose}.
\end{align}
Similarly, the equality
$\gNH (\zeta) (\tilde{\zeta}, \tilde{\zeta}) =
\gNH (\eta, \eta)$ yields
\begin{align}
  u^2 h = \frac{U^2 \hat{h}}{F^2}.
  \label{hh}
\end{align}
The equalities
$\gNH(\xi,\xi) = \gNH(\zeta) (\tilde{\xi}, \tilde{\xi})$ and
$\gNH(\xi,\eta) = \gNH(\zeta) (\tilde{\xi}, \tilde{\zeta})$ yield, after
another simple computation
\bea
uv(uv \,h -2) &=& UV(H UV-2) , \label{eqa}\\
u(uv \, h -1) &=& U^2V \left(\frac{H}{F} -\frac{1}{2} \Delta
\left (\frac{1}{F}\right ) \right ) -\frac{U}{F}. \label{eqb}
\eea
Let us next find equations that must be satisfied by the change of coordinates
between $\{u,v,x^A\}$ and $\{U,V,y^A\}$.
The identification of $\tilde{\eta}$ and
$\zeta$ leads to
\begin{align}
  \tilde{\eta} (v) &= \frac{\partial v}{\partial V} = \zeta(v) = f,
  \label{der1} \\
  \tilde{\eta} (u) &= \frac{\partial u}{\partial V} = \zeta(u) = \frac{1}{2}
  u^2 \Delta f,  \label{der2} \\
  \tilde{\eta} (x^A) &= \frac{\partial x^A}{\partial V} = \zeta(x^A) =
  -u \, \mbox{grad} f. \label{dera}
\end{align}
Similarly, the identification of $\xi$ and $\tilde{\xi}$ implies
\begin{align}
  v &= \xi(v) = \tilde{\xi} (v) = V \frac{\partial v}{\partial V}
  - U \frac{\partial v}{\partial U}, \label{der3} \\
  u &= - \xi(u) = \tilde{\xi} (v) = - V \frac{\partial u}{\partial V}
  + U \frac{\partial u}{\partial U}, \label{der4} \\
0 &= \xi(x^A) = V \frac{\partial x^A}{\partial V}
- U \frac{\partial x^A}{\partial U}. \nonumber
\end{align}
The last equation gives
\begin{align}
  x^A = X^A (UV, y).
  \label{changexA}
\end{align}
Let us integrate the pair (\ref{der1})-(\ref{der3}). Inserting
the second into the first yields the equivalent system
\begin{align*}
  U \frac{\partial v}{\partial U} + v = V f, \quad \quad
  \frac{\partial v}{\partial V} = f.
\end{align*}
Defining $C(U,V,y)$ by $\displaystyle{v = \frac{C}{U}}$ this system becomes
\begin{align*}
  \frac{\partial C}{\partial U} = V f(X(UV,y)),
  \quad \quad \quad
\frac{\partial C}{\partial V} = U f(X(UV,y)),
\end{align*}
As a consequence $U \partial_U C - V \partial_V C=0$ and  hence $C(UV,y)$.
The general solution is therefore
\begin{align}
  v = \frac{1}{U} C(UV,y), \quad \quad \frac{\partial C}{\partial (UV)}
  = f(X(UV,y)). \label{C}
\end{align}
We next solve the pair (\ref{der2})-(\ref{der4}).  In terms of the function
$G(U,V,y)$ defined by $u = U/G$, the system becomes
\begin{align*}
  \frac{\partial G}{\partial V} = - \frac{1}{2} U \Delta f,
  \quad \quad
  0 = V \frac{\partial G}{\partial V} - U \frac{\partial G}{\partial U},
\end{align*}
so that its general solution is
\begin{align}
u= \frac{U}{G(UV,y)}, \quad \quad \frac{\partial G}{\partial (UV)} =-\frac{1}{2} \Delta f (X(UV,y)).\label{G}
\end{align}
Observe that
$$
uv =\frac{C}{G} (UV,y).
$$
Eq.(\ref{eqa}) together with this provides an expression for $h$
\be
h  = \frac{G}{C}\left(\frac{G}{C} UV(UV \, H-2)+2\right)\label{hx}
\ee
and the combination of (\ref{hh}) with (\ref{G}) another one
\be
h = G^2 \frac{\hat{h}}{F^2} \label{hx1}.
\ee
The combination of (\ref{hx}) with (\ref{hx1}) provides a relation involving only coordinates on one side
\be
GC\frac{\hat{h}}{F^2} =\frac{G}{C} UV ( UV \, H-2)+2 \label{GC}.
\ee
Combining (\ref{eqb}) with (\ref{hx}) gives another such relation
\be
\frac{1}{G} = C \frac{\hat{h}}{F^2} -UV \left( \frac{H}{F}
-\frac{1}{2} \Delta \left (\frac{1}{F} \right )\right)+\frac{1}{F}
\label{G1}.
\ee
From (\ref{GC}) and (\ref{G1}) we can thus get the two functions $C$ and $G$, given by
\bea
G&=&\frac{F}{\Xi},\label{GXi}\\
C&=&\frac{F}{\hat{h}}\left[UV\left( \hat{h}-\frac{1}{2}\frac{\Delta F}{F}\right)-1+\Xi  \right] \label{CXi}
\eea
where we have defined the abbreviation
$$
\Xi:= \sqrt{\left(1+UV\frac{\Delta F}{2F} \right)^2 -U^2V^2 \frac{\hat{h}}{F^2}
      N_F}.
$$
where $N_F$ is the squared norm of $dF$ in the metric $\tilde{\gamma}$
of (\ref{NHGy}).
    \begin{remark}
  Expression (\ref{CXi}) seems to have a problem if $\hat{h}=0$, i.e.
  when the starting metric $\gNH$ has $h=0$.
  However, it is easy to check that (\ref{CXi}) has a well defined limit when
  $\hat{h}\rightarrow 0$, given by
\be
\hat{h}=0 \hspace{3mm} \Longrightarrow \hspace{3mm} C= \frac{UV F}{1+UV\frac{\Delta F}{2F}}\left[1+UV\left(\frac{\Delta F}{2F}  -\frac{N_F}{2 F^2} \right) \right].\label{Ch=0}
\ee
\end{remark}

    All the previous formulas are given for general functions $h,H,f$ and for general metric $\gamma$ and one-forms $\bm{s}$ and $\bm{S}$. However, we know from theorem \ref{th:warped} that if the MKH in the NHG has order $m\geq 3$, then these objects take a very particular, explicitly known, form. Then, we have to take this into account and incorporate these explicit forms into the previous relations in order to find the sought isometry (coordinate change) between (\ref{NHG}) and (\ref{NHGy}).

    First of all we write down in a more explicit form the two spacetime metrics.
    The function $f$ has nowhere gradient on $S_0$ and, according to
    theorem \ref{th:warped} the metric
    $\gamma$ on $S_0$ takes the form
    \begin{align*}
      \gamma = \bar \gamma + \frac{1}{N} df\otimes df
      = \bar{\gamma} + \frac{1}{Q^2(f) M} df \otimes df
          \end{align*}
    where in the second equality we used the function $M$ defined
    by $N = Q^2 M$ which was introduced in  Remark \ref{remarkb}. Let $\{ x^{A'} \}$ ($A',B' \in \{ 2, \cdots n-1\}$
      be a local coordinate in the base manifold
      $V$ of the warped product $S_0 = V \times \Sigma$. Then $\{ x^A \}
      := \{ x^{A'}, f \}$ is a coordinate system of $S_0$. From Remark
      \ref{remarkb}, $M$ depends only on $x^{A'}$  and $M^{-1}$ is thus the warping function. Using the expressions for
      $\bm{s}$ and $h$ given in theorem \ref{th:sexact}, the metric $\gNH$
      takes the form ($\otimes_s$ denotes symmetrized tensor
    product)
      \begin{align}
        \gNH = 2 du dv + 2 u  d \ln (|Q M |) \otimes_s dv
         + \frac{1}{2} u^2 M Q^2 \left ( \frac{dP}{df} + P^2
         \right ) dv^2 + \bar\gamma + \frac{1}{Q^2 M} df^2 .
\label{firstmetric}
              \end{align}
      Concerning the metric
      $\gNH(\zeta)$, we first note that the expression of $\bm{S}$ and
      $H$ on $S_0$ are, according to (\ref{SH}) and recalling that $\Delta f
      = P(f) N = P Q^2  M$,
      \begin{align}
        \bm{S} |_{S_0} & = \frac{1}{2} \left ( Pdf + \frac{dM}{M} -
        2 \frac{df}{f} \right ), \nonumber \\
              H |_{S_0} & = Q^2 M \left ( \frac{1}{2} \left ( \frac{dP}{df} + P^2
        \right )
        - \frac{P}{f} + \frac{1}{f^2} \right )        := M K(f), \label{defK}
      \end{align}
      where the last expression defines the function of one variable $K(f)$.
      The gradient of $f$ in $S_0$ is clearly
      $\mbox{grad} f = Q^2 M \partial_f$
      and equation (\ref{dera}) for $A=A'$ together with
      (\ref{changexA})
      readily implies that $x^{A'}(y)$ (i.e. independent of $U,V$).
      Without loss of generality we can
      choose the coordinate system $\{y\}$ on $S_0$ to be the same
      as $\{x\}$. Hence $x^{A'} = y^{A'}$ everywhere. This means in particular
that the function $M$  does not change by the coordinate transformation and
      the following expression holds everywhere (because they hold on $S_0$ and both sides are functions independent of $\{U,V\}$)
               \begin{align*}
          N_F = M Q^2(F), \quad \quad
        \bm{S} = \frac{1}{2} \left ( P(F) dF
        + \frac{dM}{M} - 2 \frac{dF}{F} \right ),  \quad
        \quad H = M K(F)
      \end{align*}
        and the metric $\gNH(\zeta)$ is
      \begin{align}
        \gNH(\zeta) =
        2 dU dV
        + 2 U d \ln  \left ( \frac{| M Q(F)|}{F^2} \right ) \otimes_s dV
        + U^2 M K(F) dV^2 + \bar \gamma + \frac{dF^2}{M Q^2(F)}.
        \label{secondmetric}
      \end{align}
From this point on we need to distinguish between the
three possible cases according to  Lemma \ref{ODESol}, that is
cases (a) or (b) which satisfy  $h=0$, or case (c), for which $h \neq 0$.

\subsection{The case  (a)}
This case satisfies $P(f)=0$ and $Q(f) = Q_0$. By a trivial
rescaling of $M$ we may set $Q_0=1$ without loss of generality. The function
$K(f)$ in (\ref{defK}) is $K(f)= 1/f^2$ and the metrics
(\ref{firstmetric}) and (\ref{secondmetric}) to be compared become
\begin{align}
  \gNH & = 2 du dv + 2 u \, d \ln M \otimes_s dv
      + \bar\gamma_{A'B'} dx^{A'} dx^{B'}  + \frac{1}{M} df^2, \label{Fmeta}\\
    \gNH(\zeta) & =
        2 dU dV
        + 2 U d \ln  \left ( \frac{M}{F^2} \right ) \otimes_s dV
        + \frac{U^2 M}{F^2}  dV^2 + \bar \gamma_{A'B'} dy^{A'} dy^{B'}
        + \frac{dF^2}{M}. \label{Smeta}
\end{align}
Since $\Delta F=0$, $h=0$ (and hence $\hat{h}=0$), the function
$\Xi$ simplifies to $\Xi = 1$ so that (\ref{GXi}) and (\ref{Ch=0}) yield
\begin{align*}
  G= F,  \quad \quad  C = U V F \left ( 1 -  \frac{U V M}{2F^2} \right ).
\end{align*}
From (\ref{C})  and (\ref{G}) the coordinate change is
\begin{align}
  u = \frac{U}{F}, \quad \quad v = V F \left ( 1 - \frac{UV M}{2F^2} \right ),
  \quad \quad f = \frac{\partial C}{\partial (UV)} = F - \frac{U V M}{F},
  \quad \quad x^{A'} = y^{A'}. \label{isometry1}
\end{align}
A straightforward computation shows that this coordinate change
indeed transforms (\ref{Fmeta}) into (\ref{Smeta}).

\subsection{The case (b)}

In this case we have
\begin{align*}
  Q (f) = Q_0 (f+c), \quad \quad P(f) = \frac{1}{f+c}.
\end{align*}
Again a trivial rescaling of $M$ allows one to set $Q_0=1$. The function $K(f)$
is,  from (\ref{defK}),
\begin{align*}
  K(f) = \frac{c (f+c)}{f^2}
\end{align*}
so the two metrics to be compared are
\begin{align}
  \gNH  = &  2 du dv + 2 u \, d \ln ( M |f+c| )  \otimes_s dv
      + \bar\gamma_{A'B'} dx^{A'} dx^{B'}  + \frac{1}{M (f+c)^2} df^2, \label{Fmetb}\\
    \gNH(\zeta)  = &
        2 dU dV
        + 2 U d \ln  \left ( \frac{M |F+c]}{F^2} \right ) \otimes_s dV
          + c U^2 \frac{M (F+c)}{F^2}  dV^2 \\
          & + \bar \gamma_{A'B'} dy^{A'} dy^{B'}
        + \frac{dF^2}{M (F+c)^2}. \label{Smetb}
\end{align}
The function $\Xi$ is now (given that $\hat{h}=0$ and $\Delta F
= Q^2(F) P(F) M = (F+c) M$)
\begin{align*}
  \Xi = 1 + UV \frac{(F+c) M}{2F}.
\end{align*}
The functions $C$ and $G$ are, from (\ref{Ch=0}) and (\ref{GXi}),
\begin{align*}
  C = \frac{UV F}{ 1 + UV \frac{(F+c) M}{2F}}
  \left ( 1 - UV \frac{c (F+c) M }{2 F^2} \right ), \quad \quad
  G = \frac{F}{1 + UV \frac{(F+c) M}{2F}}
\end{align*}
and the explicit coordinate change is obtained from (\ref{C}) and (\ref{G})
to be
\begin{align}
  u & = \frac{U}{F} \left ( 1 + UV \frac{(F+c) M}{2F} \right ), \quad \quad
  v = \frac{V F}{ 1 + UV \frac{(F+c) M}{2F}}
  \left ( 1 - UV \frac{c (F+c) M }{2 F^2}  \right ), \nonumber \\
f & =  \frac{1}{\left ( 1 + UV \frac{(F+c) M}{2F} \right )^2}
\left ( F - U V M \frac{c(F+c)(4 F + U V M (F +c))}{4 F^2} \right ),
\quad \quad x^{A'} = y^{A'} \label{isometry2}
  \end{align}
As before an explicit calculation shows that this coordinate change transforms
(\ref{Fmetb}) into (\ref{Smetb}).

\subsection{The case (c)}

Now we have $Q(f) = b + (f+c)^2$ (as before the multiplicative non-zero
constant $Q_0$  can be absorbed in $M$) and
$P(f) = 2 (f+c)/Q$. The functions $h$ and $K$ are, from
(\ref{expshc}) with  $N = Q^2 M$ and (\ref{defK}),
\begin{align*}
  h = M (b + (f+c)^2) , \quad \quad
    K(f) = \frac{( b + c^2 ) (b + (f+c)^2)}{f^2},
\end{align*}
and the two metrics are now
\begin{align}
  \gNH = & 2 du dv + 2 u d \ln  ( M |b + (f+c)^2| ) \otimes_s dv
  + u^2 M (b+ (f+c)^2) dv^2 \nonumber \\
  & + \bar \gamma_{A'B'} dx^{A'} dx^{B'} + \frac{df^2}{M ( b+ (f+c)^2)^2},
  \label{Fmetc} \\
  \gNH (\zeta) = & 2 dU dV + 2 U d \ln \left ( \frac{M | b+ (F+c)^2|}{F^2}
  \right ) \otimes_s dV  + U^2 M \frac{b +c^2}{F^2} \left (
  b + (F+c)^2 \right ) dV^2 \nonumber \\
  & +
  \bar \gamma_{A'B'} dy^{A'} dy^{B'} + \frac{dF^2}{M ( b+ (F+c)^2)^2}.
  \label{Smetc}
\end{align}
The coordinate change is now fairly complicated. The function $\Xi$ is, after
a  calculation that uses
and $N_F = M ( b + (F+c)^2 )^2$ and
$\Delta F = 2 (F+c) (b + (F+c)^2) M$,
\begin{align*}
  \Xi =
  \sqrt{ 1 +  2 U V \frac{ M (F+c)}{F} \left ( b + (F+c)^2 \right )
    - b (UV)^2 \frac{M^2 ( b + (F+c)^2)^2}{F^2}
  }.
\end{align*}
The function $G$ is simply $\displaystyle{\frac{F}{\Xi}}$
 while  $C$ in  (\ref{CXi})
can be rewritten after an algebraic manipulation as
\begin{align*}
  C = \frac{F}{M (b +(F+c)^2)} \left ( \Xi -1 \right )
  -  c UV.
\end{align*}
The explicit form of the coordinate change (\ref{C})-(\ref{G}) turns out to be
\begin{align}
  u &= U \frac{\Xi}{F}, \quad \quad \quad \quad
  v = - c V + \frac{F (\Xi -1)}{U  M (b + (F+c)^2)}, \nonumber \\
  f & = - c + \frac{1}{\Xi}
  \left ( F +c - b UV \frac{M}{F} \left (
  b + (F+c)^2  \right ) \right ), \quad \quad x^{A'} = y^{A'}. \label{isometry3}
\end{align}
Applying this coordinate change to the metric (\ref{Fmetc}) is now fairly
involved but one checks that indeed yields the metric (\ref{Smetc}).

\vspace{3mm}

Summarizing, we have proved the following theorem

\begin{theorem}\label{th:isom}
Let $(M_{\mbox{{\tiny NHG}}},g_{\mbox{{\tiny NHG}}})$ be a near horizon geometry with metric (\ref{NHG}) and assume that the Killing horizon $\Htwo =\{u=0\}$ is multiple with order $m\geq 3$. Let $\zeta\in \Kill^{deg}_{\Htwo}$ be independent of $\eta$. Then, the NHG of $(M_{\mbox{{\tiny NHG}}},g_{\mbox{{\tiny NHG}}})$ with respect to $\zeta$ is locally
isometric to
$(M_{\mbox{{\tiny NHG}}},g_{\mbox{{\tiny NHG}}}(\zeta))$ away from all fixed points
  of $\zeta$  . Moreover,
\begin{itemize}
\item[(i)] If $h=0$ then, in suitable coordinates, the metric $\gNH$ is either
(\ref{Fmeta}) or (\ref{Fmetb}) and the coordinate changes are,
respectively, (\ref{isometry1}) and (\ref{isometry2}).
\item[(ii)] If $h \neq 0$, then $\gNH$ can be written as (\ref{Fmetc})
and the isometry is given by (\ref{isometry3}).
\end{itemize}
\end{theorem}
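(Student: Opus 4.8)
The plan is to present the argument as a summary of the constructions carried out in the preceding subsections, split according to the three mutually exclusive cases of Lemma \ref{ODESol}. There are two logically distinct halves: first derive the \emph{necessary} form of any local isometry between (\ref{NHG}) and (\ref{NHGy}), and then verify that the resulting candidate coordinate change is \emph{sufficient}, i.e.\ that it genuinely maps one metric to the other.

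For the necessity part, I would assume a local isometry taking $\{u=v=0\}$ to $\{U=V=0\}$ and exploit the rigidity of the distinguished Killing vectors to pin down the identifications. The vector $\xi$ is the unique generator of $\Htwo$ that vanishes on $S_0$ and has $\k=1$, hence it must correspond to $\tilde\xi$; the vector $\tilde\eta=\partial_V$ is by construction the degenerate generator of the new limit, so $\zeta$ corresponds to $\tilde\eta$; finally the functional independence of $\{1,1/f,\tilde f_a\}$ as solutions of the master equation (\ref{master}) forces $\eta$ to correspond to $\tilde\zeta$. Equating the scalar products $\gNH(\zeta)(\tilde\eta,\tilde\eta)=\gNH(\zeta,\zeta)$ together with the analogous equalities for the pairs built from $\{\xi,\eta,\zeta\}$ and $\{\tilde\xi,\tilde\eta,\tilde\zeta\}$ yields the constraints (\ref{nose})--(\ref{eqb}), while the identifications $\tilde\eta=\zeta$ and $\xi=\tilde\xi$ read off directly as the first-order partial differential equations (\ref{der1})--(\ref{der4}) for the change of coordinates.

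Next I would integrate these equations. Writing $v=C/U$ and $u=U/G$ turns (\ref{der1})--(\ref{der4}) into the statement that $C$ and $G$ depend only on $(UV,y)$ with prescribed derivatives in $UV$, and the remaining equation forces $x^A=X^A(UV,y)$ as in (\ref{changexA}). Reinserting these into the algebraic constraints (\ref{eqa})--(\ref{eqb}) and (\ref{hh}) produces the two relations (\ref{GC})--(\ref{G1}) involving coordinates on only one side, which can be solved in closed form to give $G$ and $C$ as in (\ref{GXi})--(\ref{CXi}) with the abbreviation $\Xi$. At this point the expressions are still written for general data; the decisive simplification comes from Theorem \ref{th:warped}, which supplies the warped-product structure and the explicit forms of $\gamma$, $\bm{s}$ and $h$, allowing the two metrics to be written as (\ref{firstmetric}) and (\ref{secondmetric}).

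Finally I would specialize to the three cases. In case (a), $P=0$ and $Q=Q_0$ give $\Xi=1$ and the coordinate change (\ref{isometry1}); in case (b), $Q=Q_0(f+c)$ gives $\Xi=1+UV(F+c)M/(2F)$ and the change (\ref{isometry2}); in case (c), $Q=b+(f+c)^2$ gives the more elaborate $\Xi$ and the change (\ref{isometry3}). For each case I would then carry out the sufficiency check: substitute the explicit coordinate change into (\ref{Fmeta}), (\ref{Fmetb}) or (\ref{Fmetc}) and confirm that the corresponding target metric (\ref{Smeta}), (\ref{Smetb}) or (\ref{Smetc}) is recovered. I expect this last verification, especially in case (c), to be the main obstacle: the derivation above only guarantees that the coordinate change meets the necessary conditions extracted from matching a few scalar products, so the full equality of all metric components is a genuinely separate and algebraically heavy computation that relies essentially on the rigid warped-product form furnished by Theorem \ref{th:warped}.
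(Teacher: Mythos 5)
Your proposal matches the paper's proof in both structure and detail: the same rigidity argument identifying $\xi\leftrightarrow\tilde\xi$, $\zeta\leftrightarrow\tilde\eta$, $\eta\leftrightarrow\tilde\zeta$ via the functional independence of $\{1,1/f,\tilde f_a\}$, the same scalar-product constraints and integration via $v=C/U$, $u=U/G$ leading to (\ref{GXi})--(\ref{CXi}), the same specialization through Theorem \ref{th:warped} to the three cases of Lemma \ref{ODESol}, and the same final direct verification that each coordinate change transforms the source metric into the target. Your closing remark that the sufficiency check is a genuinely separate, computation-heavy step is exactly how the paper treats it as well.
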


\subsection{The main theorem}
\begin{theorem}\label{th:main}
Let $(M,g)$ be a spacetime containing a MKH $\H$ with $\dim( \Kill_\H^{deg})\geq 2$
and let $\eta,\zeta\in \Kill_\H^{deg}$.
Then, the NHGs of each connected component
of $\H_{\eta,\zeta}:=\H_{\eta}\cap\H_{\zeta}$ with respect to $\eta$ and $\zeta$ are locally isometric.
\end{theorem}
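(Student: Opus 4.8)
The plan is to reduce the general multiple Killing horizon to the already treated case of a near horizon geometry and then invoke Theorem~\ref{th:isom}. Fix one connected component of $\H_{\eta,\zeta}=\H_{\eta}\cap\H_{\zeta}$ on which neither $\eta$ nor $\zeta$ has fixed points, and let $\mathcal{N}_\eta$ denote the near horizon geometry of $(M,g)$ with respect to $\eta$, written in the standard form (\ref{NHG}) with $\eta=\partial_v$ and horizon $\Htwo=\{u=0\}$.

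First I would check that $\Htwo$ is itself a multiple Killing horizon of order at least three, so that Theorem~\ref{th:isom} becomes applicable. By Theorem~\ref{th:algebra} the hypothesis $\dim\Kill_{\H}^{deg}\geq 2$ leaves two possibilities: if $\H$ is fully degenerate its order is $m=\dim\Kill_{\H}^{deg}\geq 2$ and Theorem~\ref{NearHor}(i) yields order $\geq m+1\geq 3$ for $\Htwo$; if $\H$ is non-fully degenerate then $\dim\Kill_{\H}=\dim\Kill_{\H}^{deg}+1\geq 3$ and Theorem~\ref{NearHor}(ii) yields order $\geq 3$. In both cases $\Htwo$ has some order $m'\geq 3$. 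Moreover, since degenerate generators survive the near horizon limit (as recalled from \cite{MPS}) and the limit does not decrease the number of independent ones, $\zeta$ descends to a degenerate generator $\zeta_\eta\in\Kill_{\Htwo}^{deg}$ of $\mathcal{N}_\eta$ that is independent of $\eta$.

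The decisive step is the transitivity of the near horizon construction: the near horizon geometry of $(M,g)$ with respect to $\zeta$ is locally isometric to the near horizon geometry of $\mathcal{N}_\eta$ with respect to $\zeta_\eta$. This I would derive from the intrinsic characterization of Remark~\ref{intrinsic}, according to which an NHG is fixed by the triple $(\gamma,\bm{s},h)$ computed on a single cut $S\subset\H_{\eta,\zeta}$, together with the invariance of that triple under the passage $g\to\gNH$. Indeed $g$ and $\gNH$ agree on $\{u=0\}$, so the first fundamental form $\gamma$ and the torsion one-form $\bm{s}$ of $\zeta$ at $S$ are unchanged; and $h$ is then unchanged as well, because its defining curvature contributions $R|_{S}$ and ${\rm tr}_\gamma Ric|_{S}$ coincide for $g$ and $\gNH$ at every cut, as noted right after Remark~\ref{intrinsic}. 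Hence both constructions are built from identical data and produce isometric spacetimes.

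Assembling the pieces finishes the argument. Applying Theorem~\ref{th:isom} to the near horizon geometry $\mathcal{N}_\eta$ (order $m'\geq 3$) with the independent degenerate generator $\zeta_\eta$ shows that the near horizon geometry of $\mathcal{N}_\eta$ with respect to $\zeta_\eta$ is locally isometric to $\mathcal{N}_\eta$ (which is nothing but the near horizon geometry of $\mathcal{N}_\eta$ with respect to $\eta$, i.e.\ the NHG of $(M,g)$ with respect to $\eta$). Combined with the transitivity step this yields that the near horizon geometry of $(M,g)$ with respect to $\zeta$ is locally isometric to that with respect to $\eta$ on the chosen component, which is the claim. I expect the genuinely delicate point to be the transitivity/invariance step: while the invariance of $\gamma$ and of the curvature scalars at cuts is recorded earlier, one must confirm carefully that the torsion one-form of $\zeta$ is preserved by the zoom---equivalently, that the surviving generator $\zeta_\eta$ and its relevant first transverse derivatives along $S$ reproduce exactly the horizon data seen in $(M,g)$---since it is this coincidence of data that allows the two limits to be identified.
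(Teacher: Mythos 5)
Your proposal is correct and follows essentially the same route as the paper: reduce to the near horizon geometry of $\eta$, use the intrinsic characterization of Remark~\ref{intrinsic} together with the data relation (\ref{SH}) to identify the NHG of $(M,g)$ with respect to $\zeta$ with the NHG with respect to $\zeta$ of the NHG with respect to $\eta$, and then invoke Theorem~\ref{th:isom}. Your explicit verification that the NHG horizon has order $m\geq 3$ (via Theorems~\ref{th:algebra} and~\ref{NearHor}) is left implicit in the paper, and the delicate point you flag---invariance of the torsion one-form of $\zeta$ under the zoom---is precisely what the paper encodes in (\ref{SH}); note only that this invariance needs agreement of the first transverse derivatives $\partial_u g_{vA}$ at the horizon (not merely of the metrics on $\{u=0\}$, since $\partial_u g_{AB}$ does change), which holds by construction of the limit.
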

\begin{proof}
When computing the NHGs of  $(M,g)$ associated to  $\eta$ and $\zeta$, respectively, one starts from  different Gaussian null coordinates.
However, it follows from Remark~\ref{intrinsic} that the NHG is determined by the induced metric $\gamma$, the torsion one-form $\bf{s}$ and the function $h$
on any
cut $S$ of the connected component of $\H_{\eta,\zeta}$. 
For the two Killing vectors these objects are related  as described in (\ref{SH}).
It follows that the NHG  w.r.t.\ $\zeta$  of  $(M,g)$ coincides with the NHG w.r.t.\ $\zeta$  of the NHG  w.r.t.\ $\eta$ of $(M,g)$.

Thus, the NHGs of $(M,g)$ with respect to $\eta$ and $\zeta$ are given by
 $g_{\mbox{{\tiny NHG}}}(\eta)$, say (\ref{NHG}), and $g_{\mbox{{\tiny NHG}}}(\zeta)$, say (\ref{NHGy}), and Theorem \ref{th:isom}  gives the result at once.
\end{proof}

\section*{Acknowledgments}
MM acknowledges financial support under projects
FIS2015-65140-P (Spanish MINECO/FEDER) and
SA083P17 (Junta de Castilla y Le\'on).
TTP acknowledges financial support by the Austrian Science Fund (FWF)
P~28495-N27.
JMMS is supported under Grants No. FIS2017-85076-P (Spanish MINECO/AEI/FEDER, UE) and No. IT956-16 (Basque Government).


\end{document}